\newtheorem{claim}{}[section]
\newtheorem{theorem}[claim]{Theorem}
\newtheorem{lemma}[claim]{Lemma}
\newtheorem{proposition}[claim]{Proposition}
\newtheorem{corollary}[claim]{Corollary}
\theoremstyle{remark}
\renewenvironment{proof}{\noindent{\it Proof. \hskip0pt}}
                      {$\square$\par\medskip}
\begin{document}
\baselineskip 6.0 truemm
\parindent 1.5 true pc

\newcommand\lan{\langle}
\newcommand\ran{\rangle}
\newcommand\tr{\operatorname{Tr}}
\newcommand\ot{\otimes}
\newcommand\ttt{{\text{\bf t}}}
\newcommand\rank{\ {\text{\rm rank of}}\ }
\newcommand\choi{{\rm C}}
\newcommand\cp{{{\mathbb C}{\mathbb P}}}
\newcommand\ccp{{{\mathbb C}{\mathbb C}{\mathbb P}}}
\newcommand\pos{{\mathcal P}}
\newcommand\superpos{{{\mathbb S\mathbb P}}}
\newcommand\blockpos{{{\mathcal B\mathcal P}}}
\newcommand\jc{{\text{\rm JC}}}
\newcommand\dec{{\mathbb D}{\mathbb E}{\mathbb C}}
\newcommand\decmat{{\mathcal D}{\mathcal E}{\mathcal C}}
\newcommand\ppt{{\mathcal P}{\mathcal P}{\mathcal T}}
\newcommand\pptmap{{\mathbb P}{\mathbb P}{\mathbb T}}
\newcommand\xxxx{\bigskip\par {\color{red}\sf ========= Stop to read here ========= }}
\newcommand\join{\vee}
\newcommand\meet{\wedge}
\newcommand\ad{\operatorname{Ad}}
\newcommand{\id}{{\text{\rm id}}}
\newcommand\tsum{\textstyle\sum}
\newcommand\rk{{\text{\rm rank}}\,}
\newcommand\calI{{\mathcal I}}
\newcommand\sr{{\text{\rm SR}}\,}
\newcommand\conv{{\text{\rm conv}}\,}
\newcommand\pr{\prime}
\newcommand\e{\varepsilon}
\newcommand\tttt\intercal
\newcommand\sss{\bigstar}

\title{Choi matrices revisited. II}

\author{Kyung Hoon Han and Seung-Hyeok Kye}
\address{Kyung Hoon Han, Department of Data Science, The University of Suwon, Gyeonggi-do 445-743, Korea}
\email{kyunghoon.han at gmail.com}
\address{Seung-Hyeok Kye, Department of Mathematics and Institute of Mathematics,
    Seoul National University, Seoul 151-742, Korea}
\email{kye at snu.ac.kr}
\keywords{Choi matrices, bilinear forms, linear isomorphisms, $k$-superpositive maps, $k$-positive maps}
\subjclass{15A30, 81P15, 46L05, 46L07}
\thanks{partially supported by NRF-2020R1A2C1A01004587, Korea}

\begin{abstract}
In this paper, we consider all possible variants of Choi matrices of linear maps,
and show that they are determined by non-degenerate bilinear forms on the domain
space. We will do this in the setting of finite dimensional vector spaces.
In case of matrix algebras, we characterize all variants of Choi matrices
which retain the usual correspondences between $k$-superpositivity and
Schmidt number $\le k$ as well as $k$-positivity and $k$-block-positivity.
We also compare de Pillis' definition [Pacific J. Math. {\bf 23}
(1967), 129--137] and Choi's definition [Linear Alg. Appl. \bf 10 \rm (1975), 285--290],
which arise from different bilinear forms.
\end{abstract}
\maketitle

\section{Introduction}

The celebrated Choi's theorem \cite{choi75-10} tells us that a linear map $\phi$ between matrices
is completely positive if and only if its Choi matrix
\begin{equation}\label{choi_def}
\choi_\phi=\textstyle\sum_{i,j}e_{ij}\ot\phi(e_{ij})=(\id\ot\phi)\left(\textstyle\sum_{i,j} e_{ij}\ot e_{ij}\right)
\end{equation}
is positive (semi-definite), where $\{e_{ij}\}$ is the system of the
usual matrix units. Around its appearance in 1970's,
completely positive maps have been generally accepted as right
morphisms between operator algebras reflecting noncommutative order
structures. Actually, important notions like nuclearity and
injectivity have been described in terms of completely positive
maps. See a survey article \cite{effros_survey}. The notion of Choi
matrices plays much more important roles in the current stage of
quantum information theory. See \cite{kye_lec_note}. A variant
$\sum_{i,j}e_{ji}\ot\phi(e_{ij})$ also has been used by de Pillis
\cite{dePillis} and Jamio\l kowski \cite{jam_72} prior to Choi's
theorem, to get the correspondences for Hermiticity preserving maps
and positive maps, respectively. The notion of Choi matrices can be
defined in various infinite dimensional cases
\cite{{holevo_2011},{holevo_2011_a},{li_du},{stormer_choi_mat},{Magajna_2021},{Friedland_2019},{Haapasalo_2020},{guddar}},
and for multi-linear maps
\cite{{kye_multi_dual},{han_kye_tri},{han_kye_multi}}.

It is clear that the definition (\ref{choi_def}) of the Choi matrix
depends on the choice of basis $\{e_{ij}\}$ of matrix algebra.
In this context, Paulsen and Shultz \cite{Paulsen_Shultz}
asked what happens when we replace matrix units in (\ref{choi_def}) by other bases of matrix algebras,
and gave a condition with which the Choi's correspondence is still valid.
On the other hand, the second author \cite{kye_Choi_matrix} considered a variant
\begin{equation}\label{choi-I}
\choi^\sigma_\phi
=(\id\ot\phi)(\choi_\sigma)
=\textstyle\sum_{i,j} e_{ij}\ot \phi(\sigma(e_{ij}))
\end{equation}
of Choi matrices by replacing $\textstyle\sum_{i,j} e_{ij}\ot
e_{ij}$ in (\ref{choi_def}) by an arbitrary matrix which is written
by $\choi_\sigma$ for a linear map $\sigma$ between the domain, and
showed that $\phi\leftrightarrow\choi^\sigma_\phi$ retains the
Choi's correspondence if and only if $\sigma$ is of the form $\ad_s$
defined by $\ad_s(x)=s^*xs$ for a nonsingular $s$ in the domain.

We go further in this paper to consider
all the possible expressions
\begin{equation}\label{choi_var}
\Gamma(\phi)=\textstyle\sum_i e_i\ot \phi(f_i)\in V\ot W
\end{equation}
which look like Choi matrices, for given linear maps $\phi:V\to W$
between vector spaces and pairs $(\{e_i\},\{f_i\})$ of bases of the
domain space $V$. We show that these expressions do not depend on
the choices of pairs of bases, but is determined by the
non-degenerate bilinear forms on the domain space $V$ defined by $\lan
e_i,f_j\ran=\delta_{ij}$, where $\delta_{ij}=1$ for $i=j$ and
$\delta_{ij}=0$ for $i\neq j$. Once we fix a basis of $V$, a
bilinear form on $V$ is determined by a linear isomorphism between
$V$, which is nothing but $\sigma$ appearing in (\ref{choi-I}) when
$V$ is a matrix algebra with a fixed basis $\{e_{ij}\}$ consisting of usual matrix units.
We also show that the
expression $\sum_i e_i\ot\phi(e_i)$ is possible with a single basis
$\{e_i\}$ if and only if the corresponding bilinear form is
symmetric.

We denote by $\mathbb P_k$ and $\superpos_k$ the convex cones of all $k$-positive and all $k$-superpositive
maps \cite{{cw-EB},{ssz}} between matrix algebras, respectively.
Note that $1$-superpositive maps are also called entanglement breaking maps
\cite{{ando-04},{hsrus}}. The correspondence $\phi\mapsto\choi_\phi$ through Choi matrices
between convex cones in the mapping space ${\mathcal L}(M_m,M_n)$
and those in $M_m\ot M_n$ can be summarized in the following diagram;
\begin{equation}\label{diagram}
\begin{matrix}
\phi\in{\mathcal L}(M_m,M_n): & \superpos_1  &\subset &\superpos_k
    &\subset &\cp &\subset &\mathbb P_k &\subset &\mathbb P_1\\
\\
\downarrow\phantom{XXXXXXX}  &\updownarrow  &&\updownarrow  &&\updownarrow  &&\updownarrow &&\updownarrow\\
\\
\choi_\phi\in M_m\ot M_n: & {\mathcal S}_1  &\subset &{\mathcal S}_k &\subset
    &\pos &\subset &\blockpos_k &\subset &\blockpos_1\\
\end{matrix}
\end{equation}
where ${\mathcal S}_k$ and $\blockpos_k$ denote the convex cones of all positive matrices with Schmidt number $\le k$
\cite{{eom-kye},{terhal-sghmidt}} and $k$-block-positive matrices, respectively. See
\cite{{kye_comp-ten},{kye_lec_note}} for surveys. Note that the correspondence $\cp\leftrightarrow \pos$ in the diagram
is just the Choi's theorem \cite{choi75-10}.
Recall that a bi-partite state is called separable \cite{Werner-1989} when it is belongs to ${\mathcal S}_1$,
and entangled if it is not separable.
We show that $\phi\mapsto \choi^\sigma_\phi$ retains the correspondence
$\mathbb P_k\leftrightarrow \blockpos_k$ in (\ref{diagram})
if and only if the correspondence $\superpos_k\leftrightarrow{\mathcal S}_k$ is retained if and only if
both $\sigma$ and $\sigma^{-1}$ are $k$-positive.

After we clarify the relations between non-degenerate bilinear pairings and the duality maps between
finite dimensional vector spaces in Section 2, we get the aforementioned results for vector spaces in Section 3.
In Section 4, we restrict ourselves to linear maps between matrix algebras, and show that $\sigma\mapsto\choi^\sigma_\phi$
retains the correspondences in (\ref{diagram}) if and only if both $\sigma$ and $\sigma^{-1}$ are $k$-positive.
Finally, we compare in Section 5 two definitions $\sum e_{ij}\ot\phi(e_{ij})$ and $\sum e_{ji}\ot\phi(e_{ij})$,
which correspond to the bilinear pairings $\tr(xy^\ttt)$ and $\tr(xy)$, respectively,
in connection with some identities involving the adjoints and tensor products of linear maps.

\section{Bilinear pairings and duality maps}

Suppose that $X$ and $Y$ are finite dimensional vector spaces over the complex field with a
bilinear pairing $\lan\ ,\ \ran_\pi$. Then we have the associated linear map
$D_\pi$ from $X$ to the dual space $Y^{\rm d}$ by
\begin{equation}\label{bilinear-linear}
D_\pi(x)(y)=\lan x, y\ran_\pi.
\end{equation}
Conversely, any linear map $D_\pi:X\to Y^{\rm d}$ gives rise to the bilinear pairing
by (\ref{bilinear-linear}).
The map $D_\pi$ is injective if and only if the associated bilinear pairing
satisfies the following condition:
\begin{enumerate}
\item[(${\rm N}_1$)]
for a given $x\in X$, we have $x=0$ if and only if $\lan
x,y\ran_\pi=0$ for every $y\in Y$.
\end{enumerate}
We consider the dual map $D_\pi^{\rm d}:Y\to X^{\rm d}$ of $D_\pi:X\to Y^{\rm d}$, which is given by
$$
D_\pi^{\rm d}(y)(x)=D_\pi(x)(y)=\lan x,y\ran_\pi.
$$
Then the dual map $D_\pi^{\rm d}$ is injective if and only if the following condition:
\begin{enumerate}
\item[(${\rm N}_2$)]
for a given $y\in Y$, we have $y=0$ if and only if $\lan x,y\ran_\pi=0$ for every $x\in X$.
\end{enumerate}
We consider one more condition on the spaces $X$ and $Y$:
\begin{enumerate}
\item[(${\rm N}_3$)]
$X$ and $Y$ share the same dimension.
\end{enumerate}
Then it is easy to check that two of (${\rm N}_1$), (${\rm N}_2$) and (${\rm N}_3$) imply the other.
When this is the case, we say that a bilinear pairing $\lan\ ,\ \ran_\pi$
is {\sl non-degenerate} and $D_\pi : X\to Y^{\rm d}$ is
its {\sl duality map}. We note that a linear map $D:X\to Y^{\rm d}$
is bijective if and only if it is the duality map of a non-degenerate bilinear pairing.

Suppose that $\{e_i:i\in I\}$ and $\{f_i:i\in I\}$ are bases of $X$ and $Y$ with the same dimension, respectively.
Then we may define the bilinear pairing on $X\times Y$ by
\begin{equation}\label{basis-bi}
\lan e_i,f_j\ran_\pi=\delta_{ij},
\end{equation}
that is, we define
$$
\left\lan \textstyle\sum_i\alpha_i e_i,\sum_i\beta_i f_i\right\ran_\pi
=\textstyle\sum_{i} \alpha_i \beta_i, \qquad \alpha_i, \beta_i \in \mathbb C.
$$
This bilinear pairing is automatically non-degenerate.

\begin{proposition}\label{basic_basis}
For a given bilinear pairing $\lan\ ,\ \ran_\pi$ between finite dimensional vector spaces
$X$ and $Y$ with the same dimension, the following are equivalent:
\begin{enumerate}
\item[{\rm (i)}]
$\lan\ ,\ \ran_\pi$ is non-degenerate;
\item[{\rm (ii)}]
there exist a basis $\{e_i:i\in I\}$ of $X$ and a basis $\{f_i:i\in I\}$ of $Y$  satisfying {\rm (\ref{basis-bi})};
\item[{\rm (iii)}]
for any given basis  $\{e_i:i\in I\}$ of $X$, there exists a unique
basis $\{f_i:i\in I\}$ of $Y$  satisfying {\rm (\ref{basis-bi})}.
\end{enumerate}
\end{proposition}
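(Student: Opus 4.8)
The plan is to prove the cyclic chain of implications $\text{(iii)}\Rightarrow\text{(ii)}\Rightarrow\text{(i)}\Rightarrow\text{(iii)}$, using throughout the dictionary recorded above between non-degenerate bilinear pairings and bijective duality maps. The implication $\text{(iii)}\Rightarrow\text{(ii)}$ is immediate: one simply fixes any basis $\{e_i:i\in I\}$ of $X$ and invokes (iii) to obtain the matching basis $\{f_i:i\in I\}$ of $Y$.

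For $\text{(ii)}\Rightarrow\text{(i)}$, I would verify condition $({\rm N}_1)$ by a direct computation in the given bases: if $x=\sum_i\alpha_ie_i$, then $\lan x,f_j\ran_\pi=\alpha_j$, so the vanishing of $\lan x,\,\cdot\,\ran_\pi$ on all of $Y$ forces every $\alpha_j=0$, i.e.\ $x=0$. Since $({\rm N}_3)$ is part of the standing hypotheses of the proposition, $({\rm N}_1)$ and $({\rm N}_3)$ together yield non-degeneracy by the observation made just before the proposition. (The symmetric computation gives $({\rm N}_2)$ as well, so either pair suffices.)

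The substantive step is $\text{(i)}\Rightarrow\text{(iii)}$. Non-degeneracy makes the duality map $D_\pi:X\to Y^{\rm d}$ bijective, hence its dual map $D_\pi^{\rm d}:Y\to X^{\rm d}$ is bijective too. Given a basis $\{e_i:i\in I\}$ of $X$ with dual basis $\{e_i^{\rm d}:i\in I\}$ of $X^{\rm d}$, I would set $f_j:=(D_\pi^{\rm d})^{-1}(e_j^{\rm d})$; since $D_\pi^{\rm d}$ is an isomorphism carrying a basis to a basis, $\{f_j:j\in I\}$ is a basis of $Y$, and
\[
\lan e_i,f_j\ran_\pi=D_\pi^{\rm d}(f_j)(e_i)=e_j^{\rm d}(e_i)=\delta_{ij},
\]
so $(\ref{basis-bi})$ holds. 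Uniqueness follows because any family $\{f_j':j\in I\}$ in $Y$ satisfying $(\ref{basis-bi})$ has $D_\pi^{\rm d}(f_j')$ agreeing with $e_j^{\rm d}$ on the basis $\{e_i\}$, hence $D_\pi^{\rm d}(f_j')=e_j^{\rm d}$, and injectivity of $D_\pi^{\rm d}$ gives $f_j'=f_j$.

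I do not expect a genuine obstacle here; the only point requiring care is the bookkeeping between $D_\pi$ and its dual $D_\pi^{\rm d}$, and using the equal-dimension hypothesis $({\rm N}_3)$ exactly where it is needed --- to promote $({\rm N}_1)$ to full non-degeneracy in $\text{(ii)}\Rightarrow\text{(i)}$, and (implicitly) to transfer bijectivity from $D_\pi$ to $D_\pi^{\rm d}$ in $\text{(i)}\Rightarrow\text{(iii)}$.
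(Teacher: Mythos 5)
Your proof is correct and follows essentially the same route as the paper: the substantive step (i) $\Rightarrow$ (iii) is handled identically, by pulling the dual basis $\{e_j^{\rm d}\}$ back through the bijective map $D_\pi^{\rm d}:Y\to X^{\rm d}$ and deducing uniqueness from its injectivity. The only cosmetic difference is that you spell out (ii) $\Rightarrow$ (i) via condition $({\rm N}_1)$ together with $({\rm N}_3)$, which the paper disposes of in the remark preceding the proposition.
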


\begin{proof}
It remains to show the direction (i) $\Longrightarrow$ (iii).
For a given basis  $\{e_i:i\in I\}$ of $X$, we take the dual basis $\{e^\prime_i \in X^{\rm d} : i\in I\}$ given by
$e^\prime_i(e_j)=\delta_{ij}$. Because $D_\pi^{\rm d} : Y \to X^{\rm d}$ is bijective, we can find
$f_i\in Y$ satisfying $D_\pi^{\rm d}(f_i)=e^\prime_i$ for each $i\in I$. Then we have
$$
\lan e_i,f_j\ran_\pi=D_\pi^{\rm d}(f_j)(e_i)=e_j^\prime(e_i)=\delta_{i,j}
$$
for $i,j\in I$. The uniqueness follows from the non-degeneracy of bilinear form.
\end{proof}

A bilinear pairing on $X\times X$ is called a {\sl bilinear form} on $X$.
For a given basis $\{e_i:i\in I\}$ of a finite dimensional
vector space $X$, one can define the bilinear form by
\begin{equation}\label{basis-bi-form}
\lan e_i,e_j\ran_\pi=\delta_{ij}.
\end{equation}
Then this bilinear form is {\sl symmetric}, that is, $\lan x_1,x_2\ran_\pi=\lan x_2,x_1\ran_\pi$
for every $x_1,x_2\in X$. The bilinear form satisfies the relation (\ref{basis-bi-form})
if and only if its duality map $D_\pi$ satisfies
\begin{equation}\label{basis-dualitymap}
D_\pi(e_i)=e^\prime_i.
\end{equation}

\begin{proposition}\label{prop-basis-bil}
Suppose that $\lan\ ,\ \ran_\pi$ is a non-degenerate bilinear form on a finite dimensional
vector space $X$ over the complex field. Then the following are equivalent:
\begin{enumerate}
\item[{\rm (i)}]
$\lan\ ,\ \ran_\pi$ is symmetric;
\item[{\rm (ii)}]
there exists a basis $\{e_i:i\in I\}$ satisfying {\rm (\ref{basis-bi-form})};
\item[{\rm (iii)}]
there exists a basis $\{e_i:i\in I\}$ satisfying {\rm (\ref{basis-dualitymap})}.
\end{enumerate}
\end{proposition}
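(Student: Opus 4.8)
The plan is to dispatch the two easy equivalences first and then concentrate on the one substantive implication. The equivalence (ii) $\Longleftrightarrow$ (iii) is immediate from the remark preceding the statement: a basis $\{e_i:i\in I\}$ satisfies (\ref{basis-bi-form}) precisely when $D_\pi(e_i)(e_j)=\lan e_i,e_j\ran_\pi=\delta_{ij}=e^\prime_i(e_j)$ for all $i,j$, and since $\{e_j\}$ spans $X$ this says exactly $D_\pi(e_i)=e^\prime_i$, i.e.\ (\ref{basis-dualitymap}). The implication (ii) $\Longrightarrow$ (i) is equally routine: if $\lan e_i,e_j\ran_\pi=\delta_{ij}$, then for $x_1=\sum_i\alpha_ie_i$ and $x_2=\sum_i\beta_ie_i$ bilinearity gives $\lan x_1,x_2\ran_\pi=\sum_i\alpha_i\beta_i=\lan x_2,x_1\ran_\pi$, so the form is symmetric. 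Thus the real content is (i) $\Longrightarrow$ (ii).

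For (i) $\Longrightarrow$ (ii) I would induct on $n=\dim X$, the case $n=0$ being vacuous. The first step is to produce a vector $v\in X$ with $\lan v,v\ran_\pi\neq 0$. If no such $v$ existed, the polarization identity $\lan v,w\ran_\pi+\lan w,v\ran_\pi=\lan v+w,v+w\ran_\pi-\lan v,v\ran_\pi-\lan w,w\ran_\pi$ together with symmetry would force $2\lan v,w\ran_\pi=0$, hence $\lan v,w\ran_\pi=0$ for all $v,w$, contradicting non-degeneracy; here the characteristic $\neq 2$ of $\mathbb C$ is used. Fixing such a $v$ and using that $\mathbb C$ has square roots, choose $\lambda\in\mathbb C$ with $\lambda^2\lan v,v\ran_\pi=1$ and set $e_1=\lambda v$, so $\lan e_1,e_1\ran_\pi=1$.

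Next I would split off the orthogonal complement $X_1=\{x\in X:\lan e_1,x\ran_\pi=0\}$. Since $x-\lan e_1,x\ran_\pi e_1\in X_1$ for every $x\in X$, we get $X=\mathbb C e_1\oplus X_1$ with $\dim X_1=n-1$. The restriction of $\lan\ ,\ \ran_\pi$ to $X_1$ is again symmetric, and it is non-degenerate: if $x\in X_1$ is orthogonal to all of $X_1$, then because $\lan x,e_1\ran_\pi=\lan e_1,x\ran_\pi=0$ by symmetry, $x$ is orthogonal to all of $X$, so $x=0$. By the induction hypothesis there is a basis $\{e_2,\dots,e_n\}$ of $X_1$ with $\lan e_i,e_j\ran_\pi=\delta_{ij}$ for $i,j\ge 2$, and since $\lan e_1,e_i\ran_\pi=\lan e_i,e_1\ran_\pi=0$ for $i\ge 2$, the basis $\{e_1,\dots,e_n\}$ of $X$ satisfies (\ref{basis-bi-form}).

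The only delicate point is the very first step of the induction, namely guaranteeing an anisotropic vector with $\lan v,v\ran_\pi\neq 0$; this is exactly where symmetry, non-degeneracy, and characteristic $\neq 2$ all come together, and the availability of square roots in $\mathbb C$ is what lets us normalize to $1$ rather than merely diagonalize. Everything after that is the standard orthogonalization argument, so I expect no further obstacles.
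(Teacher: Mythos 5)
Your proposal is correct and follows essentially the same route as the paper: both dispatch (ii) $\Leftrightarrow$ (iii) and (ii) $\Rightarrow$ (i) as immediate, and prove (i) $\Rightarrow$ (ii) by producing an anisotropic vector via polarization (using characteristic $\neq 2$ and square roots in $\mathbb C$ to normalize), then orthogonalizing inductively after checking that the form stays non-degenerate on the orthogonal complement. The only cosmetic difference is that you induct on $\dim X$ by passing to the complement, whereas the paper inducts on the number of basis vectors already constructed; these are the same argument.
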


\begin{proof}
It suffices to prove the direction (i) $\Longrightarrow$ (ii).
Suppose that $\lan\ ,\ \ran_\pi$ is symmetric.
If $\lan x,x\ran_\pi=0$ for every $x\in X$ then
we have
$$
\begin{aligned}
0&=\langle x_1+x_2, x_1+x_2 \rangle_\pi\\
&= \langle x_1,x_1 \rangle_\pi + \langle x_1, x_2 \rangle_\pi
    + \langle x_2, x_1 \rangle_\pi + \langle x_2, x_2 \rangle_\pi\\
&= 2 \langle x_1, x_2 \rangle_\pi
\end{aligned}
$$
for every $x_1, x_2 \in X$, which is a contradiction. Therefore,
we can find $e_1$ such that $\langle e_1, e_1 \rangle_\pi=1$ by a scalar multiplication.
If $\{e_1\}$ is a basis then the proof is done.
In order to use the induction argument, we suppose that there exist linearly independent vectors $\{e_1,\dots, e_k\}$
satisfying {\rm (\ref{basis-bi-form})}, and we consider the subspace
$$
X_k=\{x\in X: \lan x,e_j\ran_\pi=0\ {\text{\rm for every}}\ j=1,2,\dots,k\}.
$$
For every $x\in X$, we have
$$
\left\lan x-\textstyle\sum_{i=1}^k\lan x,e_i\ran_\pi e_i, e_j\right\ran_\pi=0,\qquad j=1,2,\dots,k.
$$
and so $x-\sum_{i=1}^k\lan x,e_i\ran_\pi e_i$ belongs to $X_k$.
This means that the whole space $X$ is the direct sum of $X_k$ and the span of $\{e_1,\dots, e_k\}$.
Furthermore, the bilinear form $\lan\ ,\ \ran_\pi$ is still non-degenerate on $X_k$. Indeed,
if $x\in X_k$ and $\lan x,y\ran_\pi=0$ for every $y\in X_k$ then we see that $\lan x,y\ran =0$
for every $y\in X=X_k\oplus {\text{\rm span}\,}\{e_1,\dots e_k\}$, thus $x=0$.
Therefore, we can find $e_{k+1}\in X_k$ satisfying $\lan e_{k+1},e_{k+1} \ran_\pi = 1$
if $\{e_1,\dots, e_k\}$ is not a basis,
and this completes the induction argument.
\end{proof}

Given a basis $\mathcal B = \{b_i : i \in I\}$ of the matrix algebra
$M_m$, Paulsen and Shultz \cite{Paulsen_Shultz} defined the {\sl
duality map} by the linear map $D_{\mathcal B}:M_m\to M_m^{\rm d}$
sending $b_i$ to $b_i^\pr$. The duality map discussed
in this section generalizes this notion in the setting of vector
spaces. It is worthwhile to note that the duality map in
\cite{Paulsen_Shultz} does not depend on the choices of bases, but
on the bilinear forms determined by bases; if two bases determine
the same bilinear form, then the associated duality maps coincide.

Proposition \ref{basic_basis} (iii) tells us that a non-degenerate bilinear form
on $X$ is determined by a linear isomorphism $\sigma:X\to X$,
once we fix a  basis $\{\varepsilon_i\}$ of $X$,
which gives rise to the symmetric bilinear form by
$\lan\varepsilon_i,\varepsilon_j\ran=\delta_{ij}$.
In other words, we see that every possible non-degenerate bilinear pairing on $X$ is given by
\begin{equation}\label{symm}
\lan \varepsilon_i,\sigma(\varepsilon_j)\ran_\sigma=\delta_{ij},
\end{equation}
for a linear isomorphism $\sigma:X\to X$. It should be noted that
this correspondence between non-degenerate bilinear forms and linear
isomorphisms is valid when we fix a basis. In fact, when two bases
$\{e_i\}$ and $\{f_i\}$ induce different symmetric bilinear forms,
the identities $\lan e_i,\sigma(e_j)\ran=\delta_{ij}$ and $\lan
f_i,\sigma(f_j)\ran=\delta_{ij}$ give rise to different bilinear
forms. By the identity
$\lan\e_i,\sigma(\e_j)\ran_\sigma=\lan\e_i,\e_j\ran$, we have $\lan
x,\sigma(y)\ran_\sigma=\lan x,y\ran$, and so it follows that
\begin{equation}\label{symm-bas}
\lan x,y\ran_\sigma=\lan x,\sigma^{-1}(y)\ran,\qquad x,y\in X.
\end{equation}
For a given linear map $\sigma:X\to X$, we define the map $\sigma^\tttt:X\to X$ by
\begin{equation}\label{trans-bil}
\lan \sigma^\tttt(x),y\ran=\lan x,\sigma(y)\ran,\qquad x,y\in X.
\end{equation}
If we denote by $[\sigma]$ the matrix representing the map
$\sigma$ with respect to the fixed basis $\{\e_i\}$, then $[\sigma^\tttt]$ is nothing but the transpose of $[\sigma]$.

Instead of (\ref{symm}), one may get a symmetric bilinear form $\lan\ ,\ \ran_1$
by replacing $\e_i$ by $\sigma(\e_i)$ in both sides of
$\lan\e_i,\e_j\ran=\delta_{ij}$.
Then we have
$$
\lan x,y\ran_1=\lan\sigma^{-1}(x),\sigma^{-1}(y)\ran
=\lan x,(\sigma\circ\sigma^\tttt)^{-1}(y)\ran=\lan x,y\ran_{\sigma\circ\sigma^\tttt}.
$$
Compare with \cite[Definition 6]{Paulsen_Shultz}.

The bilinear form defined by (\ref{symm}), or equivalently by (\ref{symm-bas}),
is symmetric if and only if the matrix $[\sigma]$ is symmetric.
As an example, we consider a basis $\{e_i:i\in I\}$ of a finite-dimensional space and
a permutation $\sigma$ on the index set $I$, and define the bilinear form by
$\lan e_i,e_{\sigma(j)}\ran=\delta_{ij}$,
or equivalently $\lan e_i,e_j\ran=\delta_{i\sigma^{-1}(j)}$.
Then we see that this bilinear form is symmetric if and only if there exist
disjoint subfamilies of two-elements sets of $I$ such that $\sigma$ is the product of transpositions
on the subfamilies.
The permutation $\sigma$ induces the linear isomorphism $\sigma(e_i) = e_{\sigma(i)}$
by the abuse of notation. In this case, the matrix $[\sigma]$ is a symmetric matrix consisting of 0 and 1.
In the matrix algebras, we may fix the basis $\{e_{ij}\}$
and consider the identity permutation, to get the symmetric bilinear form
$$
\lan x,y\ran=\sum_{i,j=1}^n x_{ij}y_{ij}=\tr(xy^\ttt),\qquad x=[x_{ij}],\ y=[y_{ij}]\in M_n,
$$
where $x^\ttt$ denotes the usual transpose of the matrix $x$.
Another symmetric bilinear form
$$
\lan x,y\ran_\ttt=\sum_{i,j=1}^nx_{ij}y_{ji}=\tr(xy)=\lan x,y^\ttt\ran,\qquad x=[x_{ij}],\ y=[y_{ij}]\in M_n
$$
may be obtained by taking the permutation which sends $ij$ to $ji$.

\section{Isomorphisms between mapping spaces and tensor products}

For finite dimensional vector spaces $V$ and $W$, we are concerned
on the mapping space ${\mathcal L}(V,W)$ of all linear maps from $V$
into $W$, together with the tensor product $V\ot W$. In this
section, we will consider linear isomorphisms
\begin{equation}\label{gamma-map}
\Gamma: {\mathcal L}(V,W)\to V\ot W
\end{equation}
between these two spaces.
We will identify two spaces ${\mathcal L}(V,W)$ and $V^{\rm d}\ot W$.
In this identification, $v^\pr\ot w\in V^{\rm d}\ot W$ corresponds to the map $v \in V \mapsto v^\pr (v)w \in W$.

We proceed to show that a linear isomorphism $\Gamma$ in {\rm (\ref{gamma-map})}
has an expression (\ref{choi_var}) for a pair of bases if and only if it is the inverse of
the ampliation of the duality
map associated with the bilinear form given by the pair of bases.
$$
\xymatrix{\mathcal L(V,W) \ar[rrd]^\Gamma \ar[dd]_\simeq && \\
&& V \otimes W \ar[lld]^{D_\pi \otimes {\rm id}_W} \\
V^{\rm d} \otimes W && }
$$

\begin{lemma}\label{lemma}
Suppose that $\{e_i:i\in I\}$ and $\{f_i:i\in I\}$ are bases of a finite dimensional
vector space $V$, and $\lan\ ,\ \ran_\pi$ is a bilinear form on
$V$ with the duality map $D_\pi:V\to V^{\rm d}$. Then the following are equivalent:
\begin{enumerate}
\item[{\rm (i)}]
$\lan e_i,f_j\ran_\pi=\delta_{ij}$ for every $i,j\in I$;
\item[{\rm (ii)}]
for any vector space $W$, the identity $(D_\pi\ot \id_W)(\textstyle\sum_i e_i\ot\phi(f_i))=\phi$ holds
for every linear map $\phi : V \to W$.
\end{enumerate}
\end{lemma}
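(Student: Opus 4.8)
The plan is to unwind the identification $\mathcal L(V,W)\cong V^{\rm d}\ot W$ used in the paper and reduce both implications to a one-line evaluation on basis vectors. Under this identification, for $x\in V$ and $w\in W$ the elementary tensor $D_\pi(x)\ot w$ corresponds to the map $v\mapsto D_\pi(x)(v)\,w=\lan x,v\ran_\pi\,w$, by (\ref{bilinear-linear}). Hence, applying $D_\pi\ot\id_W$ to $\sum_i e_i\ot\phi(f_i)\in V\ot W$ and reading the outcome back in $\mathcal L(V,W)$, I get exactly the linear map
$$
v\ \longmapsto\ \sum_i \lan e_i,v\ran_\pi\,\phi(f_i),\qquad v\in V.
$$
So condition (ii) is just the statement that, for every $W$ and every $\phi$, this map coincides with $\phi$; both directions then follow by testing on a basis.

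For (i) $\Longrightarrow$ (ii), I would fix $W$, $\phi$ and an arbitrary $v\in V$, expand $v=\sum_j\alpha_j f_j$ in the basis $\{f_j\}$, and use bilinearity together with $\lan e_i,f_j\ran_\pi=\delta_{ij}$ to get $\lan e_i,v\ran_\pi=\alpha_i$. Then $\sum_i\lan e_i,v\ran_\pi\,\phi(f_i)=\sum_i\alpha_i\,\phi(f_i)=\phi\bigl(\sum_i\alpha_i f_i\bigr)=\phi(v)$, and since $v$ is arbitrary this is precisely the identity in (ii).

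For (ii) $\Longrightarrow$ (i), I would specialize to $W=V$ and $\phi=\id_V$, so that (ii) asserts that $v\mapsto\sum_i\lan e_i,v\ran_\pi f_i$ is the identity map on $V$. Evaluating at $v=f_j$ gives $\sum_i\lan e_i,f_j\ran_\pi f_i=f_j$, and comparing coefficients in the basis $\{f_i\}$ forces $\lan e_i,f_j\ran_\pi=\delta_{ij}$ for all $i,j\in I$.

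I do not expect a genuine obstacle in this lemma; the only thing that needs care is the bookkeeping in the identification $\mathcal L(V,W)\cong V^{\rm d}\ot W$ and keeping the two arguments of the (possibly non-symmetric) bilinear form in the correct slots of the convention $D_\pi(x)(y)=\lan x,y\ran_\pi$. Note also that the argument uses only that $\{f_i\}$ is a basis and that $D_\pi$ is linear; non-degeneracy of $\lan\ ,\ \ran_\pi$ is not needed for the equivalence, though (i) forces it anyway since $\{e_i\}$ and $\{f_i\}$ are then biorthogonal bases.
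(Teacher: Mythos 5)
Your proof is correct and follows essentially the same route as the paper: both directions reduce to evaluating $(D_\pi\ot\id_W)(\sum_i e_i\ot\phi(f_i))$ on the basis $\{f_j\}$, with the converse obtained by specializing to $W=V$ and $\phi=\id_V$. The only cosmetic difference is that you evaluate on a general $v$ expanded in $\{f_j\}$ rather than directly on the basis vectors, which changes nothing.
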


\begin{proof}
For the direction (i) $\Longrightarrow$ (ii), we note that the identity
$$
\begin{aligned}
(D_\pi\ot \id_W)(\textstyle\sum_i e_i\ot\phi(f_i))(f_j)
&=\textstyle\sum_i (D_\pi(e_i)\ot\phi(f_i))(f_j)\\
&=\textstyle\sum_i \lan e_i, f_j\ran_\pi \phi(f_i)\\
&=\phi(f_j),
\end{aligned}
$$
holds for every $\phi\in {\mathcal L}(V,W)$ and $j\in I$.

For the converse, we suppose that $(D_\pi \ot \id_W)(\textstyle\sum_i e_i\ot\phi(f_i))=\phi$ holds.
We take $W=V$, $\phi= {\rm id}_V$, and apply $f_j$ to the identity to get
$$
f_j={\rm id}_V(f_j)=\textstyle\sum_i \left(D_\pi(e_i)\ot f_i\right) (f_j)
=\textstyle\sum_i \lan e_i, f_j \ran_\pi f_i.
$$
Therefore, we have $\lan e_i, f_j \ran_\pi=\delta_{i,j}$.
\end{proof}

\begin{theorem}\label{vs_choi2}
Let $V$ and $W$ be finite dimensional vector spaces. For a given linear isomorphism $\Gamma$ in {\rm (\ref{gamma-map})},
the following are equivalent:
\begin{enumerate}
\item[(i)]
there exist bases $\{e_i\}$ and $\{f_i\}$ of $V$ such that $\Gamma(\phi)=\sum_i e_i\ot \phi(f_i)$
for every $\phi\in{\mathcal L}(V,W)$;
\item[(ii)]
for any basis $\{e_i\}$ of $V$, there exists a unique
basis $\{f_i\}$  of $V$ such that $\Gamma(\phi)=\sum_i e_i\ot \phi(f_i)$
for every $\phi\in{\mathcal L}(V,W)$;
\item[(iii)]
there exists a non-degenerate bilinear form $\lan\ ,\ \ran_\pi$
with the duality map $D_\pi:V\to V^{\rm d}$ such that
$\Gamma=D_\pi^{-1}\ot\id_W$.
\end{enumerate}
When these are the cases, a pair $(\{e_i\},\{f_i\})$  satisfies
{\rm (i)} if and only if $\lan e_i,f_j\ran_\pi=\delta_{i,j}$ holds.
\end{theorem}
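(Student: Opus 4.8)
The plan is to prove the cycle of implications (i) $\Rightarrow$ (iii) $\Rightarrow$ (ii) $\Rightarrow$ (i), using Lemma \ref{lemma} as the bridge between the bilinear-form language and the tensor-product identity, and Proposition \ref{basic_basis} to handle the passage from ``there exist bases'' to ``for any basis there exists a unique partner basis''. The final ``if and only if'' clause will fall out directly from Lemma \ref{lemma} once (iii) is established.

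For (i) $\Rightarrow$ (iii): given bases $\{e_i\}$ and $\{f_i\}$ with $\Gamma(\phi)=\sum_i e_i\ot\phi(f_i)$, define a bilinear form on $V$ by $\lan e_i,f_j\ran_\pi=\delta_{ij}$; this is automatically non-degenerate, as noted after \eqref{basis-bi}. Let $D_\pi$ be its duality map. By Lemma \ref{lemma}, condition (i) of that lemma (which we have) gives $(D_\pi\ot\id_W)(\sum_i e_i\ot\phi(f_i))=\phi$ for all $\phi$, i.e.\ $(D_\pi\ot\id_W)\circ\Gamma=\id_{{\mathcal L}(V,W)}$ after the identification ${\mathcal L}(V,W)\simeq V^{\rm d}\ot W$. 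Since $\Gamma$ is a linear isomorphism and $D_\pi$ is bijective (being a duality map of a non-degenerate form), $D_\pi\ot\id_W$ is bijective, so this one-sided inverse is a genuine two-sided inverse: $\Gamma=(D_\pi\ot\id_W)^{-1}=D_\pi^{-1}\ot\id_W$.

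For (iii) $\Rightarrow$ (ii): suppose $\Gamma=D_\pi^{-1}\ot\id_W$ for a non-degenerate bilinear form with duality map $D_\pi$. Fix any basis $\{e_i\}$ of $V$. By Proposition \ref{basic_basis} (i) $\Rightarrow$ (iii), there is a unique basis $\{f_i\}$ of $V$ with $\lan e_i,f_j\ran_\pi=\delta_{ij}$. Applying Lemma \ref{lemma} (i) $\Rightarrow$ (ii) to this pair, we get $(D_\pi\ot\id_W)(\sum_i e_i\ot\phi(f_i))=\phi$, hence $\sum_i e_i\ot\phi(f_i)=(D_\pi^{-1}\ot\id_W)(\phi)=\Gamma(\phi)$. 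For uniqueness of the partner basis, observe that if $\{f_i'\}$ is another basis with $\Gamma(\phi)=\sum_i e_i\ot\phi(f_i')$, then applying $D_\pi\ot\id_W$ and invoking the converse direction of Lemma \ref{lemma} forces $\lan e_i,f_j'\ran_\pi=\delta_{ij}$, so $\{f_i'\}=\{f_i\}$ by the uniqueness clause of Proposition \ref{basic_basis}. The implication (ii) $\Rightarrow$ (i) is trivial. Finally, for the last clause: if the equivalent conditions hold, then for a pair $(\{e_i\},\{f_i\})$, the identity $\Gamma(\phi)=\sum_i e_i\ot\phi(f_i)$ is equivalent, via the bijectivity of $D_\pi\ot\id_W$ and both directions of Lemma \ref{lemma}, to $\lan e_i,f_j\ran_\pi=\delta_{ij}$.

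I do not expect a serious obstacle here; the only point requiring a little care is the bookkeeping around the identification ${\mathcal L}(V,W)\simeq V^{\rm d}\ot W$ and making sure that ``$(D_\pi\ot\id_W)\circ\Gamma=\id$ plus $\Gamma$ bijective'' really does upgrade to $\Gamma=D_\pi^{-1}\ot\id_W$ --- this uses that a surjective (or injective) endomorphism of a finite dimensional space, or here a one-sided inverse between spaces of equal finite dimension, is automatically two-sided. One should also double-check that in (iii) the bilinear form is genuinely required to be non-degenerate precisely so that $D_\pi$ is invertible and $D_\pi^{-1}\ot\id_W$ makes sense; this is exactly the content of the remark at the end of the paragraph containing \eqref{bilinear-linear} that a linear map $D:X\to Y^{\rm d}$ is bijective iff it is the duality map of a non-degenerate pairing.
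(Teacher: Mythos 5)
Your proof is correct and follows essentially the same route as the paper: the cycle (i) $\Rightarrow$ (iii) $\Rightarrow$ (ii) $\Rightarrow$ (i) with Lemma \ref{lemma} supplying the identity $(D_\pi\ot\id_W)(\sum_i e_i\ot\phi(f_i))=\phi$ and Proposition \ref{basic_basis} supplying the partner basis. The only cosmetic difference is in the uniqueness step, where the paper applies $e_i^\pr\ot\id_W$ directly to get $\phi(f_i)=\phi(\tilde f_i)$ for all $\phi$, while you route through the converse direction of Lemma \ref{lemma} and the uniqueness clause of Proposition \ref{basic_basis}; both are fine, and your explicit remark that the one-sided inverse upgrades to a two-sided one is a point the paper leaves implicit.
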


\begin{proof}
Suppose that $\Gamma(\phi)=\sum_i e_i\ot \phi(f_i)$. Define
the bilinear form $\lan e_i,f_j\ran_\pi=\delta_{i,j}$.
By Lemma \ref{lemma}, we have
$$
((D_\pi\ot\id_W)\circ\Gamma)(\phi)
=(D_\pi\ot \id_W)(\textstyle\sum_i e_i\ot\phi(f_i))
=\phi.
$$
This proves the direction (i) $\Longrightarrow$ (iii).

For the direction (iii) $\Longrightarrow$ (ii), we
suppose that (iii) holds, and a basis  $\{e_i:i\in I\}$ is given. Then we can take a basis $\{f_i:i\in I\}$
satisfying $\lan e_i,f_j\ran_\pi=\delta_{i,j}$ by Proposition \ref{basic_basis}, and we have
$$
\Gamma(\phi)=\Gamma\circ(D_\pi\ot \id_W)(\textstyle\sum_i e_i\ot \phi(f_i))=\textstyle\sum_i e_i\ot \phi(f_i),
$$
by Lemma \ref{lemma}, as it was required.
For the uniqueness, we suppose that both bases $\{f_i\}$ and $\{\tilde f_i\}$
satisfy the condition. Then we have $\sum_i e_i\ot \phi(f_i)=\sum_i e_i\ot \phi(\tilde f_i)$. Taking
$e_i^\pr\ot\id_W$, we have $\phi(f_i)=\phi(\tilde f_i)$ for every $\phi\in{\mathcal L}(V,W)$. Therefore, we have
$f_i=\tilde f_i$ for each $i\in I$.

The direction (ii) $\Longrightarrow$ (i) is clear, and the last additional assertion follows from Lemma \ref{lemma} again.
\end{proof}

Theorem \ref{vs_choi2} tells us that the expression $\Gamma(\phi)=\sum_i e_i\ot \phi(f_i)$ depends on
the choice of a bilinear form $\lan\ ,\ \ran_\pi$ on $V$. Especially, this expression is invariant under the change
of bases, so far the pair of bases satisfies $\lan e_i,f_j\ran_\pi=\delta_{ij}$.
For an example, we consider the vector space $M_2$ of all $2\times 2$ complex matrices with the bilinear form
$$
\lan [x_{ij}], [y_{ij}]\ran=\sum_{i,j=1}^2 x_{ij}y_{ij}.
$$
Then the $2 \times 2$ Weyl basis consisting of
\begin{equation}\label{wepa}
E_1=\textstyle\frac 1{\sqrt 2}\left(\begin{matrix}1&0\\0&1\end{matrix}\right),\
E_2=\textstyle\frac 1{\sqrt 2}\left(\begin{matrix}1&0\\0&-1\end{matrix}\right),\
E_3=\textstyle\frac 1{\sqrt 2}\left(\begin{matrix}0&1\\1&0\end{matrix}\right),\
E_4=\textstyle\frac 1{\sqrt 2}\left(\begin{matrix}0&-1\\1&0\end{matrix}\right),
\end{equation}
considered in \cite[Remark 19]{Paulsen_Shultz}, satisfies the relation (\ref{basis-bi-form}). Because
the usual matrix units $\{e_{11}, e_{12}, e_{21}, e_{22}\}$ also satisfy the same relation,
we have
$$
\sum_{i=1}^4 E_i\ot \phi(E_i)=\sum_{i,j=1}^2e_{ij}\ot\phi(e_{ij}),
$$
which also can be checked directly.
We make it clear in the following theorem, whose proof is apparent by Lemma \ref{lemma}.

\begin{theorem}\label{corrs}
Suppose that $(\{e_i\},\{f_i\})$ and $(\{\tilde e_i\},\{\tilde f_i\})$ are pairs of bases
of a finite dimensional vector space $V$.
Then the following are equivalent:
\begin{enumerate}
\item[{\rm (i)}]
for any vector space $W$,
the identity $\sum_{i\in I}e_i\ot\phi(f_i)=\sum_{i\in I} \tilde e_i\ot \phi(\tilde f_i)$ holds
for every linear map $\phi:V\to W$;
\item[{\rm (ii)}]
two bilinear forms on $V$ defined by
$\lan e_i,f_j\ran = \delta_{ij}$ and $\lan \tilde e_i, \tilde f_j\ran=\delta_{ij}$ coincide.
\end{enumerate}
\end{theorem}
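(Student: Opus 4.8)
The plan is to read everything off from Lemma~\ref{lemma}. Write $\pi$ and $\tilde\pi$ for the bilinear forms on $V$ determined by $\lan e_i,f_j\ran_\pi=\delta_{ij}$ and $\lan\tilde e_i,\tilde f_j\ran_{\tilde\pi}=\delta_{ij}$; as noted right after (\ref{basis-bi}), bilinear forms defined from bases in this way are automatically non-degenerate, so the duality maps $D_\pi,D_{\tilde\pi}:V\to V^{\rm d}$ are bijective. Recall also from Section~2 that a bilinear pairing is recovered from its duality map, so condition (ii) is equivalent to the equality $D_\pi=D_{\tilde\pi}$.

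For (ii)$\Longrightarrow$(i): assume $D_\pi=D_{\tilde\pi}=:D$. By Lemma~\ref{lemma}, for every vector space $W$ and every $\phi\in\mathcal L(V,W)$ we have $(D\ot\id_W)(\sum_i e_i\ot\phi(f_i))=\phi=(D\ot\id_W)(\sum_i\tilde e_i\ot\phi(\tilde f_i))$, where the first equality uses the pair $(\{e_i\},\{f_i\})$ and the second uses $(\{\tilde e_i\},\{\tilde f_i\})$. Since $D$ is bijective, $D\ot\id_W$ is injective, so the two tensors in $V\ot W$ coincide, which is exactly (i).

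For (i)$\Longrightarrow$(ii): apply the invertible operator $D_\pi\ot\id_W$ to the identity in (i) and use Lemma~\ref{lemma} on the left-hand side to obtain $\phi=(D_\pi\ot\id_W)(\sum_i\tilde e_i\ot\phi(\tilde f_i))$, equivalently $\sum_i\tilde e_i\ot\phi(\tilde f_i)=(D_\pi^{-1}\ot\id_W)(\phi)$, for every $W$ and every $\phi$. On the other hand Lemma~\ref{lemma} applied to $(\{\tilde e_i\},\{\tilde f_i\})$ gives $\sum_i\tilde e_i\ot\phi(\tilde f_i)=(D_{\tilde\pi}^{-1}\ot\id_W)(\phi)$. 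Hence $D_\pi^{-1}\ot\id_W=D_{\tilde\pi}^{-1}\ot\id_W$ as linear maps $\mathcal L(V,W)\to V\ot W$ for every $W$; taking $W=\mathbb C$, so that $\mathcal L(V,\mathbb C)=V^{\rm d}$ and $V\ot\mathbb C=V$, yields $D_\pi^{-1}=D_{\tilde\pi}^{-1}$, i.e. $D_\pi=D_{\tilde\pi}$, which is (ii). (Alternatively, one can avoid inverting: put $W=V$ and $\phi=\id_V$ in (i) to get $\sum_i e_i\ot f_i=\sum_i\tilde e_i\ot\tilde f_i$, apply $D_\pi\ot\id_V$, and read off $v=\sum_i\lan\tilde e_i,v\ran_\pi\,\tilde f_i$ for all $v\in V$; comparing with $v=\sum_i\lan\tilde e_i,v\ran_{\tilde\pi}\,\tilde f_i$ and using that $\{\tilde e_i\}$ and $\{\tilde f_i\}$ are bases gives $\lan\ ,\ \ran_\pi=\lan\ ,\ \ran_{\tilde\pi}$.)

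There is essentially no obstacle; the only points that deserve a word of care are the non-degeneracy of the basis-defined bilinear forms, needed so that $D_\pi\ot\id_W$ is invertible, and the bijective correspondence between bilinear pairings and their duality maps from Section~2, which is what turns the phrase ``the two bilinear forms coincide'' into ``the two duality maps coincide.''
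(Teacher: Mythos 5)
Your proof is correct and follows the same route as the paper: both reduce (i) to the equality $D_\pi=D_{\tilde\pi}$ via Lemma~\ref{lemma} and then invoke the bijective correspondence between non-degenerate bilinear forms and their duality maps. You have merely spelled out the details (injectivity of $D\ot\id_W$ for one direction, specialization to $W=\mathbb C$ or $\phi=\id_V$ for the other) that the paper's two-line proof leaves implicit.
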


\begin{proof}
We denote by $\lan\ ,\ \ran_\pi$ and $\lan\ ,\ \ran_{\tilde\pi}$ the bilinear forms
given by $\lan e_i,f_j\ran_\pi = \delta_{ij}$ and $\lan \tilde e_i, \tilde f_j\ran_{\tilde\pi}=\delta_{ij}$,
respectively. Then the statement (i) holds if and only if $D_\pi=D_{\tilde\pi}$ by Lemma \ref{lemma}.
The proof is complete by the correspondence between the duality maps and bilinear forms.
\end{proof}

The appearance of $\delta_{ij}$ in Theorem \ref{corrs} (ii) is essential. Just existence
of a bilinear form satisfying $\lan e_i,f_j\ran = \lan \tilde e_i, \tilde f_j\ran$ does not guarantee
the identity in (i).
For an example, we take the vector space $\mathbb C^2$ with the bilinear form
$\lan x,y\ran=x_1y_1+x_2y_2$, which is non-degenerate. We take
$$
e_1=\left(\begin{matrix}1\\0\end{matrix}\right),\quad
e_2=\left(\begin{matrix}1\\1\end{matrix}\right),\quad
f_1=\left(\begin{matrix}1\\0\end{matrix}\right),\quad
f_2=\left(\begin{matrix}1\\-1\end{matrix}\right).
$$
Then we have $\lan e_i,e_j\ran=\lan f_i,f_j\ran$ for $i,j=1,2$. But,
$$
e_1\ot e_1+e_2\ot e_2=(2,1,1,1)^\ttt\neq (2,-1,-1,1)^\ttt=f_1\ot f_1+f_2\ot f_2,
$$
and so Theorem \ref{corrs} (i) does not hold for the identity map.

Now, we turn our attention to the question when the expression $\sum e_i\ot\phi(e_i)$ is possible
with a single basis $\{e_i\}$. The following is immediate by
Proposition \ref{prop-basis-bil} and Theorem \ref{vs_choi2}.

\begin{corollary}\label{coll}
Let $V$ and $W$ be finite dimensional vector spaces over the
complex field. For a given linear isomorphism $\Gamma$ in {\rm (\ref{gamma-map})},
the following are equivalent:
\begin{enumerate}
\item[(i)]
there exists a basis $\{e_i\}$ of $V$ such that $\Gamma(\phi)=\sum_i e_i\ot \phi(e_i)$;
\item[(ii)]
there exists a symmetric bilinear form $\lan\ ,\ \ran_\pi$ with the duality map $D_\pi:V\to V^d$ such that
$\Gamma=D_\pi^{-1}\ot\id_W$.
\end{enumerate}
When these are the cases, the basis $\{e_i\}$ satisfies {\rm (i)} if and only if $\lan e_i,e_j\ran_\pi=\delta_{i,j}$ holds.
\end{corollary}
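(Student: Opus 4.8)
The plan is to specialize Theorem \ref{vs_choi2} to the degenerate case of a pair of identical bases, and to use Proposition \ref{prop-basis-bil} to recognize exactly when such a specialization is available, namely when the associated bilinear form is symmetric. Thus the corollary is obtained by concatenating these two earlier results, with $f_i=e_i$ throughout.

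For the direction (i) $\Longrightarrow$ (ii), I would start from a basis $\{e_i\}$ with $\Gamma(\phi)=\sum_i e_i\ot\phi(e_i)$ and apply Theorem \ref{vs_choi2} to the pair $(\{e_i\},\{e_i\})$. The implication (i) $\Longrightarrow$ (iii) of that theorem furnishes a non-degenerate bilinear form $\lan\ ,\ \ran_\pi$ with duality map $D_\pi$ such that $\Gamma=D_\pi^{-1}\ot\id_W$, while its closing assertion, applied with $f_i=e_i$, forces $\lan e_i,e_j\ran_\pi=\delta_{ij}$. Hence $\{e_i\}$ satisfies (\ref{basis-bi-form}), and so Proposition \ref{prop-basis-bil} (the implication (ii) $\Longrightarrow$ (i)) tells us that $\lan\ ,\ \ran_\pi$ is symmetric. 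This is precisely (ii).

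For the converse (ii) $\Longrightarrow$ (i), given a symmetric non-degenerate bilinear form $\lan\ ,\ \ran_\pi$ with $\Gamma=D_\pi^{-1}\ot\id_W$, I would invoke Proposition \ref{prop-basis-bil} (the implication (i) $\Longrightarrow$ (ii)) to obtain a basis $\{e_i\}$ with $\lan e_i,e_j\ran_\pi=\delta_{ij}$ for this very form. Since condition (iii) of Theorem \ref{vs_choi2} is then satisfied, its closing assertion applies to the pair $(\{e_i\},\{e_i\})$ and yields $\Gamma(\phi)=\sum_i e_i\ot\phi(e_i)$, which is (i). The same closing assertion of Theorem \ref{vs_choi2}, specialized to $f_i=e_i$, also delivers the final claim that a basis $\{e_i\}$ satisfies (i) exactly when $\lan e_i,e_j\ran_\pi=\delta_{ij}$ holds.

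There is no genuine obstacle here; the only step requiring care is bookkeeping about which bilinear form is which. In the converse direction one must check that the basis produced by Proposition \ref{prop-basis-bil} is adapted to the \emph{given} form $\lan\ ,\ \ran_\pi$, and not merely to some symmetric form, so that the identity $\Gamma=D_\pi^{-1}\ot\id_W$ from the hypothesis and the chosen basis are compatible when the last assertion of Theorem \ref{vs_choi2} is invoked. Once this alignment is noted, the corollary is immediate.
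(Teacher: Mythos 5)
Your argument is correct and is exactly the route the paper intends: the paper states that the corollary ``is immediate by Proposition \ref{prop-basis-bil} and Theorem \ref{vs_choi2},'' and your proof simply spells out that combination by taking $f_i=e_i$ in Theorem \ref{vs_choi2} and using Proposition \ref{prop-basis-bil} to identify the existence of such a single basis with symmetry of the form. Your closing remark about checking that the basis from Proposition \ref{prop-basis-bil} is adapted to the given form $\lan\ ,\ \ran_\pi$ is the right point of care, and it is indeed satisfied.
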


Now, we consider the bilinear form on $M_2$ defined by
$$
\lan x,y\ran_\ttt=\tr(xy)=\sum_{i,j=1}^2 x_{ij} y_{ji},\qquad x=[x_{ij}], y=[y_{ij}]\in M_2,
$$
and the index set $I=\{11,12,21,22\}$. Then we have
$\lan e_{ij},e_{k\ell}\ran=1$ when and only when $(ij,k\ell)=(11,11), (12,21), (21,12), (22,22)$.
Therefore, we see that the pair
$$
(\{e_{ij}, ij\in I\},\{e_{ji}:ij\in I\})
$$ of bases determines the bilinear pairing $\lan\ ,\ \ran_\ttt$, and the corresponding Choi matrix is given by
$\sum_{i,j=1}^2e_{ij}\ot\phi(e_{ji})$ by Theorem \ref{vs_choi2}.
Corollary \ref{coll} tells us that there must exist a single basis determining this bilinear pairing.
Returning to matrices in (\ref{wepa}),
we have $\lan E_i,E_i\ran_\ttt=1$ for $i=1,2,3$ but $\lan E_4,E_4\ran_\ttt=-1$,
and so we have to replace $E_4$ by ${\rm i}E_4$ or $-{\rm i}E_4$
in order to get the relation (\ref{basis-bi-form}) with respect to this bilinear form.
Note that $E_1,E_2,E_3,{\rm i}E_4$ form Pauli matrices.
Applying Theorem \ref{vs_choi2} to Pauli matrices and the pair $(\{e_{ji}\}, \{e_{ij}\})$, we get
$$
\sum_{i=1}^3 E_i\ot \phi(E_i) + {\rm i}E_4 \ot \phi({\rm i}E_4)=\sum_{i,j=1}^2e_{ij}\ot\phi(e_{ji}).
$$
This example also has been considered in \cite{Paulsen_Shultz}
focusing on the choices of the bases, while we focus on the bilinear form produced by them.
See \cite{{Jiang_Luo_Fu_2013},{Frembs_Cavalcanti}} for discussions on the differences
between $\sum e_{ij}\ot\phi(e_{ij})$ and $\sum e_{ij}\ot\phi(e_{ji})$. See also Section 5.

\section{Choi matrices  in matrix algebras}

Now, we restrict our attention to the case of matrix algebras.
We take the usual matrix units $\{e_{ij}:i,j\in I\}$ as a fixed basis.
The corresponding symmetric bilinear form by Proposition \ref{prop-basis-bil} is given by
\begin{equation}\label{bi-lin-stan}
\lan x,y\ran=\sum_{i,j\in I}x_{ij}y_{ij}=\tr(xy^\ttt),\qquad x=[x_{ij}],\ y=[y_{ij}].
\end{equation}
When both $V=M_m$ and $W=M_n$ are matrix algebras, the correspondence $\phi\mapsto\Gamma(\phi)$ in Corollary \ref{coll}
given by the matrix units will be denoted by $\choi_\phi$. Then we have
\begin{equation}\label{stan-choi}
\begin{aligned}
\choi_\phi
&=\textstyle\sum_{i,j\in I} e_{ij}\ot \phi(e_{ij})\\
&=(\id_m\ot\phi)\left(\textstyle\sum_{i,j\in  I}e_{ij}\ot e_{ij}\right)\\
&=(\id_m\ot\phi)(\choi_{\id_m})\in M_m\ot M_n,
\end{aligned}
\end{equation}
for $\phi\in{\mathcal L}(M_m,M_n)$.
This is the Choi matrix of a linear map $\phi:M_m\to M_n$,
as it was defined in \cite{choi75-10}.

By Proposition \ref{basic_basis}, any non-degenerate bilinear form on $M_m$ is
determined by two bases $\{e_{ij}\}$ and $\{\sigma(e_{ij})\}$ for a linear isomorphism $\sigma:M_m\to M_m$.
Then, the bilinear form can be written by (\ref{symm-bas}).
The corresponding linear isomorphism $\Gamma$ given by Theorem \ref{vs_choi2} maps $\phi \in {\mathcal L}(M_m,M_n)$ to
the matrix
$$
\choi^\sigma_\phi:=\sum_{i,j\in I}e_{ij}\ot \phi(\sigma(e_{ij}))=\choi_{\phi\circ\sigma} \in M_m\ot M_n.
$$
Because $\choi^\sigma_\phi=(\id_m\ot\phi)(\choi_\sigma)$, we see that
$\choi^\sigma_\phi$ is obtained by replacing $\choi_{\id_m}$ by $\choi_\sigma$
in (\ref{stan-choi}),
as it  has been considered in \cite{kye_Choi_matrix}.
When $\sigma(x)=x^\ttt$, we have
$$
\lan x,y\ran_{\ttt}=\lan x,y^\ttt\ran=\tr(xy)=\sum_{i,j\in I}x_{ij}
y_{ji}
$$
for $x=[x_{ij}]$ and $y=[y_{ij}]$ in $M_m$, and we also have $\choi^\ttt_\phi=\sum_{i,j}e_{ij}\ot\phi(e_{ji})$,
which has been considered in \cite{dePillis,jam_72}.

For a linear map $\phi:M_m\to M_n$, the {\sl adjoint map} $\phi^*:M_n\to M_m$ is defined by
$$
\lan\phi(x),y\ran_{M_n}=\lan x,\phi^*(y)\ran_{M_m},\qquad x\in M_m,\ y\in M_n,
$$
using the bilinear forms $\lan\ ,\ \ran_{M_m}$ and $\lan\ ,\ \ran_{M_n}$ given by (\ref{bi-lin-stan}).
The relation with the dual map can be described by the commutative diagram:
$$
\xymatrix{M_n \ar[rr]^{\phi^*} \ar[d]^\simeq && M_m \ar[d]^\simeq \\
M_n^{\rm d} \ar[rr]^{\phi^{\rm d}} && M_m^{\rm d}}
$$
where column isomorphisms are the duality maps associated with the bilinear forms $\lan\ ,\ \ran_{M_m}$
and $\lan\ ,\ \ran_{M_n}$.
In fact, we have
$$
\begin{aligned}
(D_{M_m}(\phi^*(y)))(x)
&=\lan\phi^*(y),x\ran_{M_m}\\
&=\lan y,\phi(x)\ran_{M_n}\\
&=(D_{M_n}(y))(\phi(x))=(\phi^{\rm d}(D_{M_n}(y))(x),
\end{aligned}
$$
for every $x\in M_m$ and $y\in M_m$.
The Choi matrix $\choi_{\phi^*}\in M_n\ot M_m$ is nothing but the flip of $\choi_\phi\in M_m\ot M_n$.
See Section 5.

We also define the bilinear form on the mapping space ${\mathcal L}(M_m,M_n)$ by
$$
\lan\phi,\psi\ran=\lan\choi_\phi,\choi_\psi\ran,\qquad \phi,\psi\in {\mathcal L}(M_m,M_n),
$$
using the Choi matrices and the bilinear form on $M_m\ot M_n$ arising from matrix units.
See \cite{{sko-laa},{gks},{kye_comp-ten}}.
Then the identities
$$
\lan\phi,\psi^*\circ\sigma\ran
=\lan\psi\circ\phi,\sigma\ran
=\lan\psi,\sigma\circ\phi^*\ran
$$
are easily checked whenever the expression is legitimate.
For a subset $K\subset {\mathcal L}(M_m,M_n)$, we define the {\sl dual cone} $K^\circ$ by
$$
K^\circ=\{\phi\in{\mathcal L}(M_m,M_n): \lan \phi,\psi\ran\ge 0\ {\text{\rm  for every}}\ \psi\in K\}.
$$
It is well known that $K^{\circ\circ}$ is the smallest closed convex cone in ${\mathcal L}(M_m,M_n)$
containing the set $K$. For a convex cone $K$ in ${\mathcal L}(M_m,M_n)$, we use the notations
$$
K\circ\sigma = \{\phi\circ\sigma:\phi\in K\}
$$
and
$$
\choi^\sigma_K = \{ \choi^\sigma_\phi \in M_m\ot M_n : \phi \in K \}.
$$

We recall that the convex cones $\mathbb P_k$ and $\superpos_k$ are dual to each other, that is,
we have $\mathbb P_k^\circ=\superpos_k$.
The diagram (\ref{diagram}) tells us $\choi_{\mathbb P_k}=\blockpos_k$ and $\choi_{\superpos_k}={\mathcal S}_k$.
In the remainder of this section, we look for
conditions on a linear isomorphism $\sigma:M_m\to M_m$ under which the relations
$$
\choi^\sigma_{\mathbb P_k}=\blockpos_k
\qquad {\text{\rm and/or}}\qquad
\choi^\sigma_{\superpos_k}={\mathcal S}_k
$$
hold. For this purpose, it suffices to consider when
$\choi^\sigma_{\mathbb P_k}=\choi_{\mathbb P_k}$
and/or $\choi^\sigma_{\superpos_k}=\choi_{\superpos_k}$ hold.
We begin with the following:

\begin{proposition}
For a convex cone $K\subset {\mathcal L}(M_m,M_n)$ and a linear isomorphism
$\sigma:M_m\to M_m$, we have
$$
(K\circ\sigma^*)^\circ=K^\circ\circ\sigma^{-1}.
$$
\end{proposition}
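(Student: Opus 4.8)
The proof rests on a single adjunction-type identity for the bilinear form $\langle\ ,\ \rangle$ on ${\mathcal L}(M_m,M_n)$, namely
$$
\langle\phi\circ\sigma,\psi\rangle=\langle\phi,\psi\circ\sigma^*\rangle,\qquad \phi,\psi\in{\mathcal L}(M_m,M_n),
$$
for a linear map $\sigma:M_m\to M_m$. This is an instance of the identities $\langle\phi,\psi^*\circ\sigma\rangle=\langle\psi\circ\phi,\sigma\rangle=\langle\psi,\sigma\circ\phi^*\rangle$ recorded just above the proposition: the middle equality, applied in its general form $\langle b\circ a,c\rangle=\langle b,c\circ a^*\rangle$, yields the claim after relabelling $b=\phi$, $a=\sigma$, $c=\psi$. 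Alternatively one checks it directly by expanding $\choi_{\phi\circ\sigma}$ in the matrix-unit basis, which is orthonormal for $\langle x,y\rangle=\tr(xy^\ttt)$, and comparing with $\choi_\psi$.

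Granting this, the proposition is a chain of equivalences. First, unwinding the definition of the dual cone, $\phi\in(K\circ\sigma^*)^\circ$ means exactly that $\langle\phi,\psi\rangle\ge 0$ for every $\psi\in K\circ\sigma^*$, i.e.\ that $\langle\phi,\eta\circ\sigma^*\rangle\ge 0$ for every $\eta\in K$. By the identity above with $\psi:=\eta$, this is equivalent to $\langle\phi\circ\sigma,\eta\rangle\ge 0$ for every $\eta\in K$, that is, to $\phi\circ\sigma\in K^\circ$. Finally, since $\sigma$ is a linear isomorphism, $\phi\circ\sigma\in K^\circ$ holds if and only if $\phi=(\phi\circ\sigma)\circ\sigma^{-1}$ lies in $K^\circ\circ\sigma^{-1}$; conversely every element of $K^\circ\circ\sigma^{-1}$ has the form $\eta\circ\sigma^{-1}$ with $\eta\in K^\circ$, and composing with $\sigma$ on the right returns $\eta\in K^\circ$. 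Combining these equivalences gives $(K\circ\sigma^*)^\circ=K^\circ\circ\sigma^{-1}$.

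The argument has no genuinely hard step; the two points that require a little care are the bookkeeping of domains and codomains---one should note that $\sigma,\sigma^*,\sigma^{-1}$ all map $M_m$ to $M_m$, so that $K\circ\sigma^*$ and $K^\circ\circ\sigma^{-1}$ are again subsets of ${\mathcal L}(M_m,M_n)$ and the same bilinear form is used throughout---and the genuine use of the invertibility of $\sigma$ in the last equivalence, without which the symbol $\sigma^{-1}$ and the right-cancellation would be meaningless.
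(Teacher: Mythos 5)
Your proof is correct and follows essentially the same route as the paper: both reduce the statement to the adjunction identity $\langle\phi\circ\sigma,\psi\rangle=\langle\phi,\psi\circ\sigma^*\rangle$ (an instance of the displayed identities preceding the proposition) and then run the same chain of equivalences $\phi\in(K\circ\sigma^*)^\circ\Leftrightarrow\phi\circ\sigma\in K^\circ\Leftrightarrow\phi\in K^\circ\circ\sigma^{-1}$. Your extra remarks on domains/codomains and on where invertibility is actually used are accurate but not needed beyond what the paper records.
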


\begin{proof}
We have
$$
\begin{aligned}
\phi\in (K\circ\sigma^*)^\circ
&\ \Longleftrightarrow\
 \lan\phi,\psi\circ\sigma^*\ran\ge 0\ {\text{\rm for every}}\ \psi\in K\\
&\ \Longleftrightarrow\
 \lan\phi\circ\sigma,\psi\ran\ge 0\ {\text{\rm  for every}}\ \psi\in K\\
&\ \Longleftrightarrow\
\phi\circ\sigma\in K^\circ\\
&\ \Longleftrightarrow\
 \phi\in K^\circ\circ\sigma^{-1},
\end{aligned}
$$
as it was required.
\end{proof}

Because $\choi^{\tau^*}_K=\choi_{K\circ\tau^*}$, we have the following equivalences;
\begin{equation}\label{choi_equiv}
\begin{aligned}
\choi_K=\choi^{\tau^*}_K
&\ \Longleftrightarrow\ K=K\circ\tau^*\\
&\ \Longleftrightarrow\ K^\circ=(K\circ\tau^*)^\circ\\
&\ \Longleftrightarrow\ K^\circ=K^\circ\circ\tau^{-1}\\
&\ \Longleftrightarrow\ \choi_{K^\circ}=\choi^{\tau^{-1}}_{K^\circ}.
\end{aligned}
\end{equation}

\begin{theorem}\label{choi-xxx}
For a given linear isomorphism $\sigma:M_m\to M_m$ and $k=1,2,\dots$, the following are equivalent:
\begin{enumerate}
\item[{\rm (i)}]
a linear map $\phi:M_m\to M_n$ is $k$-positive if and only if $\choi^\sigma_\phi \in M_m \otimes M_n$
is $k$-block-positive for every $n =1,2,\dots$;
\item[{\rm (ii)}]
a linear map $\phi:M_m\to M_n$ is $k$-superpositive if and only if $\choi^\sigma_\phi \in M_m \otimes M_n$
has Schmidt number $\le k$ for every $n =1,2,\dots$;
\item[{\rm (iii)}]
both $\sigma$ and $\sigma^{-1}$ is $k$-positive.
\end{enumerate}
\end{theorem}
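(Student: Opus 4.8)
The key structural fact is the chain of equivalences in (\ref{choi_equiv}) together with the duality $\mathbb P_k^\circ=\superpos_k$ and the translations $\choi_{\mathbb P_k}=\blockpos_k$, $\choi_{\superpos_k}={\mathcal S}_k$ recorded in (\ref{diagram}). I would first reduce statements (i) and (ii) to the cone identities $\choi^\sigma_{\mathbb P_k}=\choi_{\mathbb P_k}$ and $\choi^\sigma_{\superpos_k}=\choi_{\superpos_k}$, as was observed just before the proposition. Then, setting $\tau^*=\sigma$ (equivalently $\tau=\sigma^*$) in (\ref{choi_equiv}) with $K=\mathbb P_k$, one gets
$$
\choi^\sigma_{\mathbb P_k}=\choi_{\mathbb P_k}
\ \Longleftrightarrow\
\mathbb P_k=\mathbb P_k\circ\sigma
\ \Longleftrightarrow\
\superpos_k=\superpos_k\circ(\sigma^*)^{-1}
\ \Longleftrightarrow\
\choi^{(\sigma^*)^{-1}}_{\superpos_k}=\choi_{\superpos_k}.
$$
Hmm — I need to match exponents carefully. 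Using $\choi^{\tau^*}_K=\choi_{K\circ\tau^*}$, the statement $\choi^\sigma_{\superpos_k}=\choi_{\superpos_k}$ is $\superpos_k=\superpos_k\circ\sigma$, and passing to dual cones via the Proposition gives $\mathbb P_k=\mathbb P_k\circ(\sigma^{-1})^*=\mathbb P_k\circ(\sigma^*)^{-1}$. So statements (i) and (ii) are each equivalent to an identity of the form ``$C=C\circ\rho$'' for $C\in\{\mathbb P_k,\superpos_k\}$ and $\rho$ obtained from $\sigma$ by adjoint and/or inverse; and these are all equivalent to each other. This is the bookkeeping half of the argument.

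The substantive half is to show that $\mathbb P_k=\mathbb P_k\circ\sigma$ (equivalently $\mathbb P_k\circ\sigma\subseteq\mathbb P_k$ and $\mathbb P_k\circ\sigma^{-1}\subseteq\mathbb P_k$, since both are cones of the same ``size'') holds if and only if both $\sigma$ and $\sigma^{-1}$ are $k$-positive. For the ``if'' direction: $k$-positive maps are closed under composition (a standard fact: if $\phi$ and $\sigma$ are $k$-positive then $\mathrm{id}_k\ot(\phi\circ\sigma)=(\mathrm{id}_k\ot\phi)\circ(\mathrm{id}_k\ot\sigma)$ is a composition of positive maps, hence positive), so $\sigma$ $k$-positive gives $\mathbb P_k\circ\sigma\subseteq\mathbb P_k$, and $\sigma^{-1}$ $k$-positive gives the reverse inclusion, whence equality. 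For the ``only if'' direction: if $\mathbb P_k\circ\sigma\subseteq\mathbb P_k$, apply this to $\phi=\id_{M_m}\in\cp\subseteq\mathbb P_k$ to conclude $\sigma=\id\circ\sigma\in\mathbb P_k$, i.e.\ $\sigma$ is $k$-positive; symmetrically $\sigma^{-1}\in\mathbb P_k$ from the reverse inclusion. So (iii) is equivalent to $\mathbb P_k=\mathbb P_k\circ\sigma$.

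Then it remains only to thread the identities from the previous paragraph so that each of (i), (ii) lands on $\mathbb P_k=\mathbb P_k\circ\sigma$; here I must be careful that the adjoint and inverse operations applied to $\sigma$ do not change the conclusion ``(iii)''. This works because $\sigma$ is $k$-positive iff $\sigma^*$ is $k$-positive (taking adjoints preserves $k$-positivity, as $\choi_{\phi^*}$ is the flip of $\choi_\phi$ and the flip preserves $k$-block-positivity), and clearly $\{\sigma,\sigma^{-1}\}$ is symmetric under inversion; so ``both $\sigma$ and $\sigma^{-1}$ are $k$-positive'' is invariant under replacing $\sigma$ by $\sigma^*$, by $\sigma^{-1}$, or by $(\sigma^*)^{-1}$. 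Assembling: (i) $\Leftrightarrow$ $\mathbb P_k=\mathbb P_k\circ\sigma$ $\Leftrightarrow$ (iii) directly; (ii) $\Leftrightarrow$ $\superpos_k=\superpos_k\circ\sigma$ $\Leftrightarrow$ (via the Proposition) $\mathbb P_k=\mathbb P_k\circ(\sigma^*)^{-1}$ $\Leftrightarrow$ both $(\sigma^*)^{-1}$ and $\sigma^*$ are $k$-positive $\Leftrightarrow$ (iii). I expect the main obstacle to be purely organizational — keeping the adjoint/inverse decorations consistent across (\ref{choi_equiv}) and the Proposition — rather than any deep step; the only genuine input is the closure of $\mathbb P_k$ under composition and its behavior under adjoints, both of which are standard and follow from the Choi-matrix picture already set up.
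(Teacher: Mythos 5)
Your proposal is correct and follows essentially the same route as the paper: rephrase (i) and (ii) as the cone identities $\choi^\sigma_{\mathbb P_k}=\choi_{\mathbb P_k}$ and $\choi^\sigma_{\superpos_k}=\choi_{\superpos_k}$, run them through the chain (\ref{choi_equiv}) with $K=\mathbb P_k$ and $\tau^*=\sigma$ (resp.\ $\tau=\sigma^{-1}$), and identify $\mathbb P_k=\mathbb P_k\circ\sigma$ with (iii) via closure of $\mathbb P_k$ under composition and evaluation at $\id\in\mathbb P_k$. Your explicit remarks on the adjoint-invariance of $k$-positivity and on the two inclusions making up $\mathbb P_k=\mathbb P_k\circ\sigma$ simply spell out steps the paper leaves implicit.
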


\begin{proof}
By the one-to-one correspondence $\phi \leftrightarrow
\choi^\sigma_\phi$, the statements (i) and (ii) can be rephrased as
$\choi^\sigma_{\mathbb P_k}=\blockpos_k$ and
$\choi^\sigma_{\superpos_k}={\mathcal S}_k$, respectively.

We apply the equivalent statements in (\ref{choi_equiv}) with $K=\mathbb P_k$.
Then (iii) implies that $\mathbb P_k=\mathbb P_k\circ\sigma$, which implies
$\choi_{\mathbb P_k}=\choi^\sigma_{\mathbb P_k}$ by (\ref{choi_equiv}) with $\tau^*=\sigma$.
Thus we have $\choi^\sigma_{\mathbb P_k}=\blockpos_k$ and the statement (i).
We also have $\mathbb P_k=\mathbb P_k\circ(\sigma^{-1})^*$, which also implies the relation
$\choi^\sigma_{\mathbb P_k^\circ}=\choi_{\mathbb P_k^\circ}$
by (\ref{choi_equiv}) with $\tau=\sigma^{-1}$. Therefore, we have
$\choi^\sigma_{\superpos_k}=\choi^\sigma_{\mathbb P_k^\circ}=\choi_{\mathbb P_k^\circ}=
\choi_{\superpos_k}={\mathcal S}_k$,
which is the statement (ii).

For the direction (i) $\Longrightarrow$ (iii), we suppose that (i) holds with $m=n$.
Then we have $\mathbb P_k=\mathbb P_k\circ\sigma$ by (\ref{choi_equiv}) again.
Since $\id\in \mathbb P_k$, we see that both $\sigma$ and $\sigma^{-1}$ belong to $\mathbb P_k$.
The direction (ii) $\Longrightarrow$ (iii) can be seen from the relation
$\mathbb P_k=\mathbb P_k\circ (\sigma^{-1})^*$ by (\ref{choi_equiv}).
\end{proof}

It is well known \cite{schneider} that an order isomorphism between matrix algebras is of the form
$\ad_s$ or $\ad_s\circ\ttt$ for a nonsingular matrix $s$. See also \cite{{molnar},{Paulsen_Shultz},{semrl_souror}}.
Because the map $\ad_s\circ\ttt$ is never $k$-positive for $k\ge 2$
when $m,n\ge 2$, we see that an order isomorphism $\sigma$
is a $k$-positive with $k\ge 2$ if and only if it is a complete order isomorphism,
that is, both $\sigma$ and $\sigma^{-1}$ are completely positive if and only if
it is of the form $\ad_s$ with a nonsingular $s$.
This shows that \cite[Theorem 2.2]{kye_comp-ten} holds only when $s$ is nonsingular.
When $k=\min\{m,n\}$, we recover the following result \cite{{Paulsen_Shultz},{kye_Choi_matrix}}.

\begin{corollary}{\rm (\cite{{Paulsen_Shultz},{kye_Choi_matrix}})}\label{cp-pos-Choi}
For a given $\sigma:M_m\to M_m$, the following are equivalent:
\begin{enumerate}
\item[{\rm (i)}]
$\phi:M_m\to M_n$ is completely positive if and only if $\choi^\sigma_\phi$ is positive for all $n \in \mathbb N$;
\item[{\rm (ii)}]
$\sigma$ is a complete order isomorphism;
\item[{\rm (iii)}]
$\sigma=\ad_s$ for a nonsingular $s\in M_m$.
\end{enumerate}
\end{corollary}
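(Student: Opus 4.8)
The plan is to read the corollary off from Theorem~\ref{choi-xxx} by specializing to $k=\min\{m,n\}$, using the two classical identifications that a map $\phi:M_m\to M_n$ is completely positive precisely when it is $\min\{m,n\}$-positive, and that a matrix in $M_m\ot M_n$ is positive precisely when it is $\min\{m,n\}$-block-positive; in the language of the diagram~(\ref{diagram}) this is $\mathbb P_{\min\{m,n\}}=\cp$ and $\blockpos_{\min\{m,n\}}=\pos$. Together with the classification of order isomorphisms recalled in the paragraph just before the statement, this should close everything with almost no new work.

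For (ii)$\,\Rightarrow\,$(i): a complete order isomorphism $\sigma$ has both $\sigma$ and $\sigma^{-1}$ completely positive, hence $k$-positive for every $k$. Fix $n$, put $k=\min\{m,n\}$, and invoke the direction (iii)$\,\Rightarrow\,$(i) of Theorem~\ref{choi-xxx}: $\phi:M_m\to M_n$ is $k$-positive if and only if $\choi^\sigma_\phi$ is $k$-block-positive. By the two identifications above this is exactly the assertion that $\phi$ is completely positive iff $\choi^\sigma_\phi$ is positive, and since $n$ was arbitrary we obtain~(i).

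For (i)$\,\Rightarrow\,$(ii): apply (i) with $n=m$ only. Using $\choi^\sigma_\phi=\choi_{\phi\circ\sigma}$ and Choi's theorem, condition (i) for $n=m$ becomes: for every $\phi\in\mathcal L(M_m,M_m)$, the map $\phi\circ\sigma$ is completely positive iff $\phi$ is, i.e.\ $\cp=\cp\circ\sigma$. Feeding $\phi=\id$ into this gives $\sigma\in\cp$, and feeding $\phi=\sigma^{-1}$ into it gives $\sigma^{-1}\in\cp$, so $\sigma$ is a complete order isomorphism. (This is the argument already used for (i)$\,\Rightarrow\,$(iii) in the proof of Theorem~\ref{choi-xxx}, read at $k=m$.)

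Finally (ii)$\,\Leftrightarrow\,$(iii) is essentially the discussion preceding the corollary: $\ad_s$ with nonsingular $s$ is a complete order isomorphism since $\ad_s$ and $\ad_{s^{-1}}$ are completely positive, while conversely an order isomorphism of $M_m$ is by~\cite{schneider} either $\ad_s$ or $\ad_s\circ\ttt$, and the second kind is not $2$-positive once $m\ge2$ (the case $m=1$ being trivial); nonsingularity of $s$ is forced because $\sigma$ is bijective. I do not anticipate a genuine obstacle here; the only point to be careful about is that the value $k=\min\{m,n\}$ varies with $n$, so Theorem~\ref{choi-xxx} must be applied once for each $n$, and that the (i)$\,\Rightarrow\,$(iii) direction there already needs only the single instance $n=m$.
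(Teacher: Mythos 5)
Your proof is correct and takes essentially the same route as the paper: the corollary is stated there as the specialization of Theorem~\ref{choi-xxx} to $k=\min\{m,n\}$ (using $\mathbb P_{\min\{m,n\}}=\cp$ and $\blockpos_{\min\{m,n\}}=\pos$) combined with the preceding discussion of Schneider's classification of order isomorphisms, which is exactly what you carry out. The one point the paper leaves implicit---that $k=\min\{m,n\}$ depends on $n$, so the theorem must be invoked separately for each $n$, while the converse direction needs only the single instance $n=m$---you address explicitly and correctly.
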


We close this section to find conditions for $s\in M_m$ with which
the variant $\choi^{\ad_s}_\phi$ of Choi matrix can be expressed by $\sum b_{ij}\ot\phi(b_{ij})$
for a single basis $\{b_{ij}\}$ of $M_m$.

\begin{proposition}
Suppose that $s\in M_m$ is nonsingular with $\sigma=\ad_s$. Then the following are equivalent:
\begin{enumerate}
\item[{\rm (i)}]
$\choi^\sigma_\phi=\sum b_{ij}\ot\phi(b_{ij})$ for a basis $\{b_{ij}\}$ of $M_m$;
\item[{\rm (ii)}]
the bilinear form $\lan\ ,\ \ran_\sigma$ is symmetric;
\item[{\rm (iii)}]
$s$ is symmetric or anti-symmetric.
\end{enumerate}
\end{proposition}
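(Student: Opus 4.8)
The plan is to split the stated equivalence into (i) $\Leftrightarrow$ (ii), which is essentially Corollary \ref{coll} once one identifies the bilinear form attached to $\choi^\sigma$, and (ii) $\Leftrightarrow$ (iii), which is a short computation with the matrix $s$.

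For (i) $\Leftrightarrow$ (ii): the pair of bases $(\{e_{ij}\},\{\sigma(e_{ij})\})$ is exactly the one producing $\choi^\sigma$, so by Theorem \ref{vs_choi2} the linear isomorphism $\Gamma=\choi^\sigma$ equals $D_\sigma^{-1}\ot\id_{M_n}$, where $D_\sigma$ is the duality map of the bilinear form $\lan\ ,\ \ran_\sigma$ of (\ref{symm-bas}). By Corollary \ref{coll}, statement (i) holds if and only if $\Gamma=D_\pi^{-1}\ot\id_{M_n}$ for \emph{some} symmetric bilinear form $\lan\ ,\ \ran_\pi$. Since $\Gamma$ determines its bilinear form uniquely — the equality $D_\pi^{-1}\ot\id_{M_n}=D_\sigma^{-1}\ot\id_{M_n}$ forces $D_\pi=D_\sigma$, and non-degenerate bilinear forms correspond bijectively to their duality maps (Section 2) — such a symmetric $\lan\ ,\ \ran_\pi$ exists precisely when $\lan\ ,\ \ran_\sigma$ is itself symmetric, which is (ii).

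For (ii) $\Leftrightarrow$ (iii): from the remark following (\ref{trans-bil}), $\lan\ ,\ \ran_\sigma$ is symmetric if and only if $\sigma^\tttt=\sigma$. I would compute $\sigma^\tttt$ for $\sigma=\ad_s$ from the defining identity (\ref{trans-bil}) together with the standard bilinear form (\ref{bi-lin-stan}): using $(s^*ys)^\ttt=s^\ttt y^\ttt\bar s$ and cyclicity of the trace,
$$
\lan \ad_s^\tttt(x),y\ran=\lan x,s^*ys\ran=\tr\big((\bar s\,x\,s^\ttt)\,y^\ttt\big)=\lan \bar s\,x\,s^\ttt,\,y\ran ,
$$
so $\ad_s^\tttt(x)=\bar s\,x\,s^\ttt$. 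Thus (ii) is equivalent to $s^*xs=\bar s\,x\,s^\ttt$ for all $x\in M_m$. As $s$ is nonsingular this rearranges to $(\bar s^{-1}s^*)\,x=x\,(s^\ttt s^{-1})$ for every $x$; putting $x=I$ shows the two multipliers coincide, and a matrix commuting with all of $M_m$ is a scalar, so $\bar s^{-1}s^*=s^\ttt s^{-1}=\lambda I$ for some $\lambda\neq 0$. Then $s^\ttt=\lambda s$, and transposing gives $s=\lambda s^\ttt=\lambda^2 s$, so $\lambda=\pm1$ and $s$ is symmetric or anti-symmetric. Conversely, if $s^\ttt=\pm s$ then $s^*=\bar s^\ttt=\pm\bar s$, and substituting shows $s^*xs=\pm\bar s\,x\,s=\bar s\,x\,s^\ttt$, giving (ii).

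I expect the main obstacle to be purely bookkeeping: since $\ad_s$ interleaves the transpose $x\mapsto x^\ttt$, the complex conjugate, and the adjoint $x\mapsto x^*$, one must track these carefully when unwinding $\sigma^\tttt$ and when passing between $s^\ttt=\pm s$ and $s^*=\pm\bar s$. A computation-light alternative for (ii) $\Leftrightarrow$ (iii) is to identify $M_m$ with $\mathbb C^m\ot\mathbb C^m$ via $e_{ij}\leftrightarrow e_i\ot e_j$, under which $[\ad_s]=s^*\ot s^\ttt$; symmetry of $[\ad_s]$ then reads $s^*\ot s^\ttt=\bar s\ot s$, and uniqueness of tensor factorisation (available because $s$ is invertible) forces $s^\ttt=\lambda s$ with $\lambda^2=1$ exactly as above. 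The conceptual weight of the proof lies in the reduction of the first half to Corollary \ref{coll}; the rest is routine.
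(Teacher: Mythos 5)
Your proof is correct and follows essentially the same route as the paper: the paper likewise treats (i) $\Leftrightarrow$ (ii) as already settled by Corollary \ref{coll} ("it suffices to show the equivalence between (ii) and (iii)"), and its proof of (ii) $\Leftrightarrow$ (iii) arrives at the very same identity $s^*xs=\bar s\,x\,s^\ttt$ and resolves it by the same scalar argument, yielding $s^\ttt=\pm s$. The only cosmetic difference is that the paper packages the final step as $\ad_{ss^{-\ttt}}=\id_{M_m}$, hence $ss^{-\ttt}=zI_m$ with $|z|=1$, instead of your direct centrality argument.
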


\begin{proof}
It suffices to show the equivalence between (ii) and (iii). By
replacing $x$ and $y$ in $\langle x, \sigma^{-1}(y) \rangle =
\langle y, \sigma^{-1}(x) \rangle$ by $\sigma(x)$ and $\sigma(y)$
respectively, we see that (ii) is equivalent to
$$
\langle \sigma(x), y \rangle = \langle \sigma(y), x \rangle,\qquad x,y\in M_m.
$$
Since
$$
\lan \sigma(y),x\ran = \lan s^*ys,x\ran = \tr(s^*ys x^\ttt) = \tr(ys x^\ttt s^*) = \lan y, \bar sxs^\ttt\ran,
$$
this is equivalent to
the relation $s^*xs=\bar sxs^\ttt$, or equivalently
$(\bar s)^{-1}s^* x s s^{-\ttt} =  x$ for every $x\in M_m$, where we denote $s^{-\ttt}=(s^{-1})^\ttt=(s^\ttt)^{-1}$.
This is also equivalent to $\ad_{ss^{-\ttt}}=\id_{M_m}$.
Therefore, we conclude that the statement (ii) is equivalent to
$ss^{-\ttt}=zI_m$ for $|z|=1$, or equivalently
$$
s = z s^\ttt\quad {\text{\rm for a complex number}}\ z\ {\text{\rm with}}\  |z|=1.
$$
Since $s^\ttt = (z s^\ttt)^\ttt = z s$, we have $s = z^2 s$, thus
$z=\pm 1$. Therefore, we have $s=\pm s^\ttt$, and this completes the proof.
\end{proof}

Note that we actually have shown that the map $\ad_s:M_m\to M_m$ is {\sl symmetric} in the sense of
$(\ad_s)^\tttt=\ad_s$ if and only if the matrix $s\in M_m$ is symmetric or anti-symmetric.

\section{Choi matrices and bilinear forms}

Throughout this section, we fix a basis $\{\e_i\}$ of a finite dimensional vector space $V$.
Then we have seen that the corresponding bilinear form and Choi matrix are given by
$$
\lan \e_i,\e_j\ran_V=\delta_{ij}\qquad {\text{\rm and}}\qquad
\choi_\phi=\sum \e_i\ot \phi(\e_i),
$$
respectively, where $\phi:V\to W$ is a linear map.
If $W$ is also endowed with a bilinear form $\lan\ ,\ \ran_W$ associated with a fixed basis $\{\zeta_j\}$, then
we can define the {\sl adjoint} $\phi^*:W\to V$ by
$$
\lan \phi(x),y\ran_W=\lan x,\phi^*(y)\ran_V,\qquad x\in V,\ y\in W,
$$
as in the case of linear maps between matrix algebras.
For a given linear isomorphism $\sigma:V\to V$, we
have also defined the corresponding bilinear form and Choi matrix by
$$
\lan x,y\ran_\sigma=\lan x,\sigma^{-1}(y)\ran_V,\quad x,y\in V
\qquad {\text{\rm and}}\qquad
\choi^\sigma_\phi=\textstyle\sum_i \e_i\ot\phi(\sigma(\e_i)),
$$
for a linear map $\phi$ from $V$ to the other vector space $W$. If
a linear isomorphism $\tau:W\to W$ is given then we can define
the adjoint $\phi^{*_{\sigma,\tau}}:W\to V$ by
\begin{equation}\label{def-adj}
\lan \phi(x), y\ran_{\tau}
=\lan x, \phi^{*_{\sigma,\tau}}(y)\ran_{\sigma},\qquad x\in V,\ y\in W,
\end{equation}
which depends on the isomorphisms $\sigma$ and $\tau$.
The isomorphism $\sigma^\tttt$ defined in (\ref{trans-bil}) is nothing but the adjoint $\sigma^*$ defined above,
once we fix a basis. Nevertheless,
we retain the notation $\sigma^\tttt$ when $\sigma$ is an isomorphism from a vector space into itself,
which determines a bilinear form with respect to a fixed basis. We have the identity
$$
\langle x, y \rangle_{\sigma^\tttt}
= \langle x, (\sigma^\tttt)^{-1} (y) \rangle_V
= \langle \sigma^{-1} (x), y \rangle_V
= \langle y, \sigma^{-1} (x) \rangle_V
= \langle y, x \rangle_\sigma.
$$
We can also consider the bilinear form on the tensor product $V \otimes W$ by the fixed basis $\{\e_i \otimes \zeta_j\}$. Since
$\lan \e_i \otimes \zeta_j, \e_k \otimes \zeta_\ell \ran_{V \otimes W}
= \delta_{(i,j),(k,\ell)} = \lan \e_i, \e_k \ran_V \lan \zeta_j, \zeta_\ell \ran_W$,
we have
$$\lan x_1 \otimes y_1, x_2 \otimes y_2 \ran_{V \otimes W} = \lan x_1, x_2 \ran_V \lan y_1, y_2 \ran_W,
$$
which also implies that
$$
\lan x_1 \otimes y_1, x_2 \otimes y_2 \ran_{\sigma \otimes \tau}
= \lan x_1 \otimes y_1, \sigma^{-1}(x_2) \otimes \tau^{-1}(y_2) \ran_{V \otimes W}
= \lan x_1, x_2 \ran_\sigma \lan y_1, y_2 \ran_\tau.
$$

\begin{proposition}\label{ids}
Suppose that $\sigma:V\to V$ and $\tau:W\to W$ are linear isomorphisms.
For a linear map $\phi:V\to W$, we have the identities:
\begin{enumerate}
\item[{\rm (i)}]
$\lan\phi(x),y\ran_\tau=\lan\choi^\sigma_\phi, x\ot y\ran_{\sigma\ot\tau}$ for $x\in V$ and $y\in W$;
\item[{\rm (ii)}]
$\lan \choi^\tau_{\phi^{*_{\sigma,\tau}}},y\ot x\ran_{\tau\ot\sigma^\tttt}
=\lan \choi^\sigma_\phi, x\ot y\ran_{\sigma\ot \tau}$ for $x\in V$ and $y\in W$;
\item[{\rm (iii)}]
$\phi^{*_{\sigma,\tau}}=\sigma\circ\phi^*\circ\tau^{-1}$.
\end{enumerate}
\end{proposition}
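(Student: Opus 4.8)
The plan is to prove the three identities essentially in the order (i), (iii), (ii), since (ii) follows most cleanly once we have a concrete description of the adjoint from (iii). For (i), I would expand everything against the fixed basis. Writing $\choi^\sigma_\phi=\sum_i\e_i\ot\phi(\sigma(\e_i))$ and $x=\sum_k x_k\e_k$, $y\in W$, I compute
$$
\lan\choi^\sigma_\phi,x\ot y\ran_{\sigma\ot\tau}
=\textstyle\sum_i \lan\e_i,x\ran_\sigma\,\lan\phi(\sigma(\e_i)),y\ran_\tau
=\textstyle\sum_i \lan\e_i,\sigma^{-1}(x)\ran_V\,\lan\phi(\sigma(\e_i)),y\ran_\tau,
$$
using the product formula $\lan x_1\ot y_1,x_2\ot y_2\ran_{\sigma\ot\tau}=\lan x_1,x_2\ran_\sigma\lan y_1,y_2\ran_\tau$ established just before the proposition. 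Now $\sum_i\lan\e_i,\sigma^{-1}(x)\ran_V\,\sigma(\e_i)$ is just $\sigma(\sigma^{-1}(x))=x$ because $\lan\e_i,z\ran_V$ reads off the $i$-th coordinate of $z$ in the basis $\{\e_i\}$ (this is the content of $\lan\e_i,\e_j\ran_V=\delta_{ij}$); hence by linearity of $\phi$ and of $\lan\ ,\ \ran_\tau$ in the first slot the sum collapses to $\lan\phi(x),y\ran_\tau$, which is (i).

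For (iii), I would use the defining relation (\ref{def-adj}) together with the rewriting $\lan x,z\ran_\sigma=\lan x,\sigma^{-1}(z)\ran_V$ and $\lan u,v\ran_\tau=\lan u,\tau^{-1}(v)\ran_W$. Starting from $\lan\phi(x),y\ran_\tau=\lan\phi(x),\tau^{-1}(y)\ran_W$ and using the ordinary adjoint $\phi^*$ with respect to $\lan\ ,\ \ran_V,\lan\ ,\ \ran_W$, this equals $\lan x,\phi^*(\tau^{-1}(y))\ran_V$. On the other hand $\lan x,\phi^{*_{\sigma,\tau}}(y)\ran_\sigma=\lan x,\sigma^{-1}(\phi^{*_{\sigma,\tau}}(y))\ran_V$. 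Since these agree for all $x\in V$ and the bilinear form $\lan\ ,\ \ran_V$ is non-degenerate, we get $\sigma^{-1}\circ\phi^{*_{\sigma,\tau}}=\phi^*\circ\tau^{-1}$, i.e. $\phi^{*_{\sigma,\tau}}=\sigma\circ\phi^*\circ\tau^{-1}$.

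For (ii), I would apply (i) twice. The left-hand side is an instance of (i) with the roles of domain/codomain and of $\sigma/\tau$ interchanged, applied to the map $\phi^{*_{\sigma,\tau}}:W\to V$ with the bilinear form $\lan\ ,\ \ran_\sigma$ on $V$ (note the $\sigma^\tttt$ in the subscript $\tau\ot\sigma^\tttt$ matches the identity $\lan x,y\ran_{\sigma^\tttt}=\lan y,x\ran_\sigma$ derived in the text, so that $\choi^\tau_{\phi^{*_{\sigma,\tau}}}$ is paired in the ``$\sigma$-form'' on the $V$-factor). Thus the left-hand side equals $\lan\phi^{*_{\sigma,\tau}}(y),x\ran_{\sigma}$, which by the very definition (\ref{def-adj}) of $\phi^{*_{\sigma,\tau}}$ equals $\lan\phi(x),y\ran_{\tau}$; and by (i) again this is $\lan\choi^\sigma_\phi,x\ot y\ran_{\sigma\ot\tau}$, giving (ii). The only mild obstacle is bookkeeping: keeping straight which bilinear form ($\lan\ ,\ \ran_\sigma$ versus $\lan\ ,\ \ran_{\sigma^\tttt}$) sits on which tensor factor, and checking that the symmetry relation $\lan x,y\ran_{\sigma^\tttt}=\lan y,x\ran_\sigma$ is exactly what converts the ``flipped'' pairing in (ii) into an honest application of (i). Once that identification is made the computations are routine.
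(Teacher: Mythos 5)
Your proposal is correct and follows essentially the same route as the paper: (i) by expanding against the coordinates $\lan\e_i,\,\cdot\,\ran_\sigma$ of $x$ in the basis $\{\sigma(\e_i)\}$, (iii) by unwinding the defining relation through $\lan\ ,\ \ran_V$ and $\lan\ ,\ \ran_W$, and (ii) by applying (i) to $\phi^{*_{\sigma,\tau}}$ and converting via $\lan u,v\ran_{\sigma^\tttt}=\lan v,u\ran_\sigma$ and the definition of the adjoint. The only blemish is the subscript in $\lan\phi^{*_{\sigma,\tau}}(y),x\ran_{\sigma}$, which should be $\sigma^\tttt$ (or, equivalently, the arguments should be swapped) since $\lan\ ,\ \ran_\sigma$ need not be symmetric; your own parenthetical about the symmetry relation shows this is a misprint rather than a gap.
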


\begin{proof}
We express arbitrary
$x\in V$ by $x=\sum_i x_i\sigma(\e_i)$, which implies
$$
\lan\e_j,x\ran_\sigma=\sum_i x_i\lan \e_j,\sigma(\e_i)\ran_\sigma=x_j.
$$
Then, we have
$x=\sum_i\lan\e_i,x\ran_\sigma\sigma(\e_i)$, and
$\phi(x)=\sum_i\lan\e_i,x\ran_\sigma\phi(\sigma(\e_i))$. Therefore, we have
\begin{equation}\label{id-choi-dual}
\begin{aligned}
\lan\phi(x),y\ran_\tau
&=\textstyle\sum_i\lan\e_i,x\ran_\sigma \lan\phi(\sigma(\e_i)),y\ran_\tau\\
&=\left\lan \textstyle\sum_i\e_i\ot \phi(\sigma(\e_i)), x\ot y\right\ran_{\sigma\ot\tau}\\
&=\lan\choi^\sigma_\phi, x\ot y\ran_{\sigma\ot\tau}.
\end{aligned}
\end{equation}
For the identity in (ii), we have
\begin{equation}\label{flip-choi}
\begin{aligned}
\lan \choi^\tau_{\phi^{*_{\sigma,\tau}}},y\ot x\ran_{\tau\ot\sigma^\tttt}
&=\lan \phi^{*_{\sigma,\tau}}(y),x\ran_{\sigma^\tttt}\\
&=\lan x, \phi^{*_{\sigma,\tau}}(y)\ran_\sigma\\
&=\lan \phi(x),y\ran_\tau
=\lan \choi^\sigma_\phi, x\ot y\ran_{\sigma\ot \tau}.
\end{aligned}
\end{equation}
Finally, we also have
$$
\begin{aligned}
\langle \phi(x), y \rangle_\tau
&= \langle \phi(x), \tau^{-1}(y) \rangle_W \\
&= \langle x, \phi^* \circ \tau^{-1}(y) \rangle_V
= \langle x, \sigma \circ \phi^* \circ \tau^{-1}(y) \rangle_\sigma,
\end{aligned}
$$
as it was required.
\end{proof}

In case of matrix algebras, we take the matrix units $\{e_{ij}\}$ as the fixed basis, as before.
From Proposition \ref{ids} (i), we have the identity
$$
\lan \choi_\phi, x\ot y\ran =\lan \phi(x),y\ran, \qquad \phi\in{\mathcal L}(M_m,M_n),\ x\in M_m,\ y\in M_n.
$$
This identity may be
considered as the definition of the Choi matrix, since it gives rise to the
entries of the Choi matrix when $x=e_{ij}$ and $y=e_{k\ell}$.
The identity
$$
\lan \choi_{\phi^*}, y\ot x\ran=\lan \choi_\phi, x\ot y\ran,  \qquad \phi\in{\mathcal L}(M_m,M_n),\ x\in M_m,\ y\in M_n
$$
from Proposition \ref{ids} (ii)
tells us that the Choi matrix $\choi_{\phi^*} \in M_n \otimes M_m$ of the adjoint $\phi^*$ is nothing but the flip of
the Choi matrix $\choi_\phi \in M_m \otimes M_n$.

Now, we use the bilinear forms given by $\lan x,y\ran_\ttt=\tr(xy)=\lan x, y^\ttt\ran$
with the corresponding Choi matrices
$$
\choi^\ttt_\phi=\choi_{\phi\circ\ttt}=\textstyle\sum e_{ij}\ot\phi(e_{ji})
=\textstyle\sum e_{ji}\ot\phi(e_{ij})=
(\ttt\ot\id)(\choi_\phi),
$$
as they have been used in \cite{{dePillis},{jam_72}}.
Then we have the identity
$$
\lan \choi^\ttt_{\phi}, x\ot y\ran_{\ttt\ot\ttt}
=\lan\phi(x),y\ran_\ttt,
$$
by Proposition \ref{ids} (i).
We write $\phi^\sss=\phi^{*_{\ttt,\ttt}}$ for simplicity, that is, we define $\phi^\sss$ by
$$
\lan \phi(x), y\ran_\ttt = \lan x, \phi^\sss(y) \ran_\ttt
$$
with respect to the bilinear forms $\lan\ ,\ \ran_\ttt$.
Then we also have
$$
\lan\choi^\ttt_{\phi^\sss},y\ot x\ran_{\ttt\ot\ttt}=\lan\choi^\ttt_\phi,x\ot y\ran_{\ttt\ot\ttt}
\qquad {\text{\rm and}}\qquad \phi^\sss= \ttt\circ\phi^*\circ\ttt,
$$
by Proposition \ref{ids} (ii) and (iii), respectively. Since
$$
\choi^\ttt_{\phi^\sss} = \choi_{\ttt \circ \phi^*} = [\phi^*(e_{ij})^\ttt] = ({\rm id} \otimes \ttt)(\choi_{\phi^*}),
$$
we see that $\choi_{\phi^\sss}^\ttt$ is the partial transpose of $\choi_{\phi^*}$ with respect to the second subsystem.
Since $\choi^\ttt_{\phi^\sss} = (\ttt \otimes {\rm id})(\choi_{\phi^\sss})$, we also have
$$
\choi_{\phi^\sss}=(\choi_{\phi^*})^\ttt.
$$
Here, $(\choi_{\phi^*})^\ttt$ is the global transpose of $\choi_{\phi^*}$ distinguished from
$\choi_{\phi^*}^\ttt$ in notations.
In particular, $\choi_{\phi^\sss}$ is the conjugation of the flip of $\choi_\phi$
when $\phi$ is Hermiticity preserving. This recovers \cite[Lemma 4.1.10]{stormer_book}.

For linear maps $\psi_i:M_{A_i}\to M_{B_i}$ for $i=1,2$ and $\phi:M_{A_1}\to M_{A_2}$, we have the
following simple identity
\begin{equation}\label{id}
(\psi_1\ot{\psi_2})(\choi_\phi)=\choi_{\psi_2\circ\phi\circ\psi_1^*}.
\end{equation}
See \cite{kye_comp-ten}. This identity is very useful to characterize mapping cones \cite{{stormer-dual},{gks}},
to provide unified simple arguments for various criteria arising from quantum information theory,
to get many equivalent statements to the PPT (Positive Partial Transpose) square conjecture by Christandl \cite{ppt}
which claims that the composition of two maps which are both completely positive and completely copositive is entanglement breaking.
When we endow $M_{A_i}$ and $M_{B_i}$ with bilinear forms arising from $\sigma_i$ and $\tau_i$, respectively,
it is reasonable to expect that $\psi_1\ot\psi_2$ sends $\choi^{\sigma_1}_\phi$ to $\choi^{\tau_1}_\Phi$
with $\Phi:=\psi_2\circ\phi\circ\psi_1^{*_{\sigma_1,\tau_1}}$. In fact, we have
$$
(\psi_1\ot \psi_2)(\choi^{\sigma_1}_\phi)
= (\psi_1\ot \psi_2)(\choi_{\phi \circ \sigma_1})
= \choi_{\psi_2 \circ \phi \circ \sigma_1 \circ \psi_1^*}
= \choi_{\Phi \circ \tau_1} \\
=\choi^{\tau_1}_\Phi.
$$
We will show this in the vector space level. See {\sc Figure 1}.

\begin{figure}\label{fig1}
\begin{center}
\setlength{\unitlength}{0.05 truecm}
\begin{picture}(140,75)
\put (-9,0){$(V_2,\sigma_2)$}
\put (17,3){\vector (1,0){80}}
\put (99,0){$(W_2,\tau_2)$}
\put (-9,50){$(V_1,\sigma_1)$}
\put (17,50){\vector (1,0){80}}
\put (97,53){\vector (-1,0){80}}
\put (99,50){$(W_1,\tau_1)$}
\put (5,45){\vector(0,-1){37}}
\put (-2,25){$\phi$}
\put (50,56){$\psi_1^{*_{\sigma_1,\tau_1}}$}
\put (50,43){$\psi_1$}
\put (50,6){$\psi_2$}
\put (110,45){\vector(0,-1){37}}
\put (113,25){$\Phi:=\psi_2\circ\phi\circ\psi_1^{*_{\sigma_1,\tau_1}}$}
\put (39,21){\line(1,0){36}}
\put (75,21){\line(0,1){12}}
\put (75,33){\line(-1,0){36}}
\put (39,33){\line(0,-1){12}}
\thicklines
\put (45, 25){$\psi_1\ot\psi_2$}
\put (17,26){\line(1,0){22}}
\put (17.3,25.9){\line(0,1){2}}
\put (75, 26){\vector(1,0){22}}
\put (17,28){\line(1,0){22}}
\put (75, 28){\vector(1,0){22}}
\end{picture}
\end{center}
\caption{The map $\psi_1\ot \psi_2$ sends $\choi^{\sigma_1}_\phi$ to $\choi^{\tau_1}_{\Phi}$.}
\end{figure}

\begin{proposition}
Suppose that $V_i$ and $W_i$ are endowed with the bilinear forms given by isomorphisms $\sigma_i$ and $\tau_i$,
respectively, and $\psi_i:V_i\to W_i$ is a linear map, for $i=1,2$.
For a given $\phi:V_1\to V_2$, put $\Phi=\psi_2\circ\phi\circ\psi_1^{*_{\sigma_1,\tau_1}}$.
Then we have
$$
(\psi_1\ot \psi_2)(\choi^{\sigma_1}_\phi)=\choi^{\tau_1}_\Phi.
$$
\end{proposition}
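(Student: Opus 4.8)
The plan is to promote to the general vector-space setting the heuristic computation carried out for matrix algebras just before the statement, namely
$$(\psi_1\ot\psi_2)(\choi^{\sigma_1}_\phi)=(\psi_1\ot\psi_2)(\choi_{\phi\circ\sigma_1})=\choi_{\psi_2\circ\phi\circ\sigma_1\circ\psi_1^*}=\choi_{\Phi\circ\tau_1}=\choi^{\tau_1}_\Phi.$$
Two ingredients are needed. The first is the vector-space analogue of the identity (\ref{id}): for linear maps $\eta:V_1\to V_2$, $\psi_1:V_1\to W_1$ and $\psi_2:V_2\to W_2$, one has $(\psi_1\ot\psi_2)(\choi_\eta)=\choi_{\psi_2\circ\eta\circ\psi_1^*}$, where $\psi_1^*:W_1\to V_1$ is the adjoint taken with respect to the bilinear forms coming from the fixed bases $\{\e_i\}$ of $V_1$ and $\{\zeta_j\}$ of $W_1$. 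The second is the relation $\sigma_1\circ\psi_1^*=\psi_1^{*_{\sigma_1,\tau_1}}\circ\tau_1$, which is immediate from Proposition \ref{ids} (iii) applied to $\psi_1$, since that proposition gives $\psi_1^{*_{\sigma_1,\tau_1}}=\sigma_1\circ\psi_1^*\circ\tau_1^{-1}$.

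For the first ingredient, I would expand $\psi_1(\e_i)$ in the fixed basis of $W_1$ as $\psi_1(\e_i)=\sum_j\lan\psi_1(\e_i),\zeta_j\ran_{W_1}\zeta_j=\sum_j\lan\e_i,\psi_1^*(\zeta_j)\ran_{V_1}\zeta_j$, substitute this into $(\psi_1\ot\psi_2)(\choi_\eta)=\sum_i\psi_1(\e_i)\ot\psi_2(\eta(\e_i))$, interchange the two summations, and then use the expansion $v=\sum_i\lan\e_i,v\ran_{V_1}\e_i$ (valid because $\lan\e_i,\e_k\ran_{V_1}=\delta_{ik}$) to recognize $\sum_i\lan\e_i,\psi_1^*(\zeta_j)\ran_{V_1}\e_i=\psi_1^*(\zeta_j)$. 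This yields $\sum_j\zeta_j\ot\psi_2(\eta(\psi_1^*(\zeta_j)))=\choi_{\psi_2\circ\eta\circ\psi_1^*}$, as wanted. Alternatively one could pair both sides against simple tensors using Proposition \ref{ids} (i), but the direct computation seems more transparent.

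With these in hand the proof is a short chain. Since $\choi^{\sigma_1}_\phi=\choi_{\phi\circ\sigma_1}$ by the very definition of $\choi^{\sigma_1}_\phi$, applying the first ingredient with $\eta=\phi\circ\sigma_1$ gives $(\psi_1\ot\psi_2)(\choi^{\sigma_1}_\phi)=\choi_{\psi_2\circ\phi\circ\sigma_1\circ\psi_1^*}$. Substituting $\sigma_1\circ\psi_1^*=\psi_1^{*_{\sigma_1,\tau_1}}\circ\tau_1$ rewrites the subscript map as $\psi_2\circ\phi\circ\psi_1^{*_{\sigma_1,\tau_1}}\circ\tau_1=\Phi\circ\tau_1$, so the right-hand side is $\choi_{\Phi\circ\tau_1}=\choi^{\tau_1}_\Phi$. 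I do not expect a genuine obstacle here; the only point requiring care is to keep track of which bilinear forms are used where — the plain adjoint $\psi_1^*$ in the first ingredient refers to the fixed-basis forms on $V_1$ and $W_1$, whereas $\psi_1^{*_{\sigma_1,\tau_1}}$ refers to the twisted forms determined by $\sigma_1$ and $\tau_1$, and Proposition \ref{ids} (iii) is exactly the bridge between the two. It is also worth remarking that $\sigma_2$ and $\tau_2$ do not enter the identity at all, consistently with the fact that the Choi matrix $\choi^\sigma_\psi$ depends only on the bilinear form chosen on the domain.
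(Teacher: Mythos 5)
Your proof is correct. It takes a genuinely different (and more modular) route than the paper's. The paper proves the identity in one shot by pairing $(\psi_1\ot\psi_2)(\choi^{\sigma_1}_\phi)=\sum_i\psi_1(\e_i)\ot(\psi_2\circ\phi\circ\sigma_1)(\e_i)$ against an arbitrary simple tensor $y_1\ot y_2$ with respect to $\lan\ ,\ \ran_{\tau_1\ot\tau_2}$, pushing everything through the twisted adjoints until the pairing becomes $\lan\Phi(y_1),y_2\ran_{\tau_2}$, and then invoking Proposition \ref{ids} (i) together with non-degeneracy to identify the tensor as $\choi^{\tau_1}_\Phi$. You instead first establish the untwisted identity $(\psi_1\ot\psi_2)(\choi_\eta)=\choi_{\psi_2\circ\eta\circ\psi_1^*}$ at the vector-space level by a direct basis expansion (which is sound: the expansions $w=\sum_j\lan w,\zeta_j\ran_{W_1}\zeta_j$ and $v=\sum_i\lan\e_i,v\ran_{V_1}\e_i$ are exactly what the fixed symmetric forms provide), and then reduce the twisted statement to it via $\choi^\sigma_\phi=\choi_{\phi\circ\sigma}$ and the conversion formula $\sigma_1\circ\psi_1^*=\psi_1^{*_{\sigma_1,\tau_1}}\circ\tau_1$ from Proposition \ref{ids} (iii). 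This is precisely the heuristic chain the paper displays for matrix algebras just before the proposition; your contribution is to supply the vector-space version of the identity (\ref{id}) that makes the chain rigorous. The paper's pairing argument buys self-containedness (no separate lemma needed), while yours isolates the untwisted identity as a reusable statement and makes transparent why $\sigma_2$ and $\tau_2$ play no role; both rest on the same two facts (the adjoint relation and the fixed-basis expansions), so the choice is a matter of organization rather than substance.
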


\begin{proof}
We have $(\psi_1\ot \psi_2)(\choi^{\sigma_1}_\phi)
=\textstyle\sum_i \psi_1(\e_i)\ot (\psi_2\circ\phi\circ\sigma_1)(\e_i)$
with the fixed basis $\{\e_i\}$. Therefore, we have
$$
\begin{aligned}
\lan (\psi_1\ot \psi_2)(\choi^{\sigma_1}_\phi), y_1\ot y_2\ran_{\tau_1\ot \tau_2}
&=\textstyle\sum_i \lan \psi_1(\e_i),y_1\ran_{\tau_1}
    \lan(\psi_2\circ\phi\circ\sigma_1)(\e_i), y_2\ran_{\tau_2}\\
&=\textstyle\sum_i \lan \e_i,\psi_1^{*_{\sigma_1,\tau_1}}(y_1)\ran_{\sigma_1}
    \lan\e_i, (\psi_2\circ\phi\circ\sigma_1)^{*_{\sigma_1,\tau_2}}(y_2)\ran_{\sigma_1}.
\end{aligned}
$$
Now, we use the identity $x=\sum_i\lan\e_i,x\ran_{V_1}\e_i$, to proceed
$$
\begin{aligned}
&=\textstyle\sum_i \lan \e_i,(\sigma_1^{-1}\circ\psi_1^{*_{\sigma_1,\tau_1}})(y_1)\ran_{V_1}
    \lan\e_i, (\psi_2\circ\phi\circ\sigma_1)^{*_{\sigma_1,\tau_2}})(y_2)\ran_{\sigma_1}\\
&=\lan (\sigma_1^{-1}\circ\psi_1^{*_{\sigma_1,\tau_1}})(y_1),
     (\psi_2\circ\phi\circ\sigma_1)^{*_{\sigma_1,\tau_2}}(y_2)\ran_{\sigma_1}\\
&=\lan (\psi_2\circ\phi\circ\sigma_1)\circ(\sigma_1^{-1}\circ\psi_1^{*_{\sigma_1,\tau_1}})(y_1),y_2\ran_{\tau_2}\\
&=\lan \Phi(y_1),y_2\ran_{\tau_2},
\end{aligned}
$$
from which we get the required identity by Proposition \ref{ids} (i).
\end{proof}

If we endow matrix algebras with bilinear forms $\lan\ ,\ \ran_\ttt$, then we have
$$
(\psi_1\ot \psi_2)(\choi^{\ttt}_\phi)=\choi^{\ttt}_{\psi_2\circ\phi\circ\psi_1^\sss},
$$
which is basically same as (\ref{id}).
If we replace $\phi$ by $\phi \circ \ttt$, then we get the relation
$$
(\psi_1\ot\psi_2)(\choi_\phi)=\choi_{\psi_2\circ\phi\circ\ttt\circ\psi_1^\sss}^\ttt.
$$
Discussions in the section are summarized in {\sc Table 1}.

\begin{table}\label{table}
\begin{center}
\renewcommand{\arraystretch}{1.4}
\begin{tabular}{|c|rl|}
\hline
bilinear form &$\lan x,y\ran=\tr(xy^\ttt)=\sum x_{ij}y_{ij}$
   &\vline\quad  $\lan x,y\ran_\ttt=\tr(xy)=\sum x_{ij}y_{ji}$\\
\hline
\multirow{3}{6em}{Choi matrix} &$\choi_\phi=\sum e_{ij}\ot\phi(e_{ij})$ &\vline\quad
       $\choi^\ttt_\phi=\sum e_{ij}\ot\phi(e_{ji})$\\
& $\lan\choi_\phi,x\ot y\ran=\lan\phi(x),y\ran$
      &\vline\quad $\lan\choi^\ttt_\phi,x\ot y\ran_{\ttt\ot\ttt}=\lan\phi(x),y\ran_\ttt$\\ \cline{2-3}
& $(\ttt\ot \id)(\choi_\phi)$ &$=\quad \choi^\ttt_\phi$\\
\hline
\multirow{2}{4em}{adjoint} &$\lan\phi(x),y\ran = \lan x,\phi^*(y)\ran$ &\vline\quad
   $\lan \phi(x),y\ran_\ttt = \lan x,\phi^\sss(y)\ran_\ttt$ \\ \cline{2-3}
& $\ttt\circ\phi^*\circ\ttt$ &$=\quad \phi^\sss$\\
\hline
\multirow{5}{6em}{Choi matrix of adjoint} & $\lan\choi_{\phi^*},y\ot x\ran=\lan\choi_\phi,x\ot y\ran$ &\vline\quad
    $\lan\choi^\ttt_{\phi^\sss},y\ot x\ran_{\ttt\ot\ttt}=\lan\choi^\ttt_\phi,x\ot y\ran_{\ttt\ot\ttt}$\\
 &$\choi_{\phi^*}=$ flip of $\choi_{\phi}$
     &\vline\quad $\choi^\ttt_{\phi^\sss}=$ flip of $\choi^\ttt_{\phi}$\\ \cline{2-3}
  & {$\lan \choi_{\phi^*}, x^\ttt \otimes y \ran$} &$=\lan\choi^\ttt_{\phi^\sss},x\ot y\ran_{\ttt\ot\ttt}$\\
& {$({\rm id} \otimes \ttt)(\choi_{\phi^*})$} &{$=\quad \choi_{\phi^\sss}^\ttt$}\\
& $(\choi_{\phi^*})^\ttt$        &$=\quad \choi_{\phi^\sss}$\\
\hline
\multirow{2}{7em}{tensor products of maps} & $(\psi_1\ot\psi_2)(\choi_\phi)=\choi_{\psi_2\circ\phi\circ\psi_1^*}$
              &  \vline\quad $(\psi_1\ot \psi_2)(\choi^{\ttt}_\phi)
              =\choi^{\ttt}_{\psi_2\circ\phi\circ\psi_1^\sss}$\\ \cline{2-3}
& $(\psi_1\ot\psi_2)(\choi_\phi)$ &
$=\quad \choi_{\psi_2\circ\phi\circ\ttt\circ\psi_1^\sss}^\ttt$ \\
\hline
\end{tabular}
\caption{Comparison between two bilinear forms $\tr(xy^\ttt)$ and $\tr(xy)$
 on matrix algebras together with associated Choi matrices and
adjoints.}
\end{center}
\end{table}


\begin{thebibliography}{99}

\bibitem{ando-04}
T. Ando, \it Cones and norms in the tensor product of matrix spaces,
\rm Linear Alg. Appl. \bf 379 \rm (2004), 3--41.

\bibitem{choi75-10}
M.-D. Choi, \it Completely positive linear maps on complex matrices,
\rm Linear Alg. Appl. \bf 10 \rm (1975), 285--290.

\bibitem{cw-EB}
D. Chru\'{s}ci\'{n}ski and A. Kossakowski, \it On Partially
Entanglement Breaking Channels, \rm Open Sys. Inform. Dynam. {\bf
13} (2006), 17--26.

\bibitem{dePillis}
J. de Pillis, \it Linear transformations which preserve Hermitian
and positive semidefinite operators, \rm Pacific J. Math. {\bf 23}
(1967), 129--137.

\bibitem{effros_survey}
E. G. Effros,
\it Aspects of non-commutative order,
\rm $C^*$-Algebras and Applications to Physics (Los Angeles, 1977), pp. 1--40,
Lecture Note in Math. Vol. 650, Spronger, 1978.

\bibitem{eom-kye}
M.-H. Eom and S.-H. Kye, \it Duality for positive linear maps in
matrix algebras, \rm Math. Scand. \bf 86 \rm (2000), 130--142.

\bibitem{Frembs_Cavalcanti}
M. Frembs and E. G. Cavalcanti,
\it Variations on the Choi-Jamiolkowski isomorphism,
\rm preprint. arXiv 2211.16533.

\bibitem{Friedland_2019}
S. Friedland,
\it Infinite dimensional generalizations of Choi's Theorem,
\rm Spec. Matrices {\bf 7} (2019); 67--77.

\bibitem{gks}
M. Girard, S.-H. Kye and E. St\o rmer, \it Convex cones in mapping
spaces between matrix algebras, \rm Linear Alg. Appl. {\bf 608}
(2021), 248--269.

\bibitem{guddar}
S. Gudder, \it Operator isomorphisms on Hilbert space tensor
products, \rm preprint. arXiv 2010.15901.

\bibitem{Haapasalo_2020}
E. Haapasalo,
\it The Choi--Jamio\l kowski isomorphism and covariant quantum channels,
\rm Quantum Stud.: Math. Found. {\bf 8} (2021), 351--373.

\bibitem{han_kye_tri}
K. H. Han and S.-H, Kye,
\it Various notions of positivity for bi-linear maps and applications to tri-partite entanglement,
\rm J. Math. Phys. {\bf 57} (2016), 015205.

\bibitem{han_kye_multi}
K. H. Han and S.-H. Kye,
\it Construction of multi-qubit optimal genuine entanglement witnesses,
\rm J. Phys. A: Math. Theor. {\bf 49} (2016), 175303.

\bibitem{holevo_2011}
A. S. Holevo,
\it Entropy gain and the Choi--Jamiolkowski correspondence for infinite dimensional quantum evolutions,
\rm Theor. Math. Phys. {\bf 166} (2011), 123--138.

\bibitem{holevo_2011_a}
A. S. Holevo,
\it The Choi--Jamiolkowski forms of quantum Gaussian channels,
\rm J. Math. Phys. {\bf 52} (2011), 042202.

\bibitem{hsrus}
M. Horodecki, P. W. Shor and M. B. Ruskai, \it Entanglement braking
channels, \rm Rev. Math. Phys. \bf 15 \rm (2003), 629--641.

\bibitem{jam_72}
A. Jamio\l kowski, \it Linear transformations which preserve trace
and positive semidefinite operators, \rm Rep. Math. Phys. {\bf 3}
(1972), 275--278.

\bibitem{Jiang_Luo_Fu_2013}
M. Jiang, S. Luo and S. Fu,
\it Channel-state duality,
\rm Phys. Rev. A {\rm 87} (2013), 022310.

\bibitem{kye_multi_dual}
S.-H. Kye, \it Three-qubit entanglement witnesses with the full
spanning properties, \rm J. Phys. A: Math. Theor. {\bf 48} (2015),
235303.

\bibitem{kye_Choi_matrix}
S.-H. Kye, \it Choi matrices revisited, \rm J. Math. Phys. {\bf 63}
(2022), 092202.

\bibitem{kye_comp-ten}
S.-H. Kye, \it Compositions and tensor products of linear maps
between matrix algebras, \rm Linear Algebra Appl. {\bf 658} (2023),
283--309.

\bibitem{kye_lec_note}
S.-H. Kye, \lq\lq Positive Maps in Quantum Information Theory\rq\rq,
Lecture Notes,  Seoul National University, Seoul, 2023.
{https://www.math.snu.ac.kr/$\sim$kye/book/qit.html}

\bibitem{li_du}
Y. Li and H.-K. Du,
\it Interpolations of entanglement breaking channels and equivalent conditions for completely positive maps,
\rm J. Funct. Anal. {\bf 268} (2015), 3566--3599.

\bibitem{Magajna_2021}
B. Magajna,
\it Cones of completely bounded maps,
\rm Positivity {\bf 25} (2021), 1--29.

\bibitem{molnar}
L. Moln\' ar, \it Order--automorphisms of the set of bounded
observables, \rm J. Math. Phys. {\bf 42} (2001), 5904.

\bibitem{Paulsen_Shultz}
V. I. Paulsen and F. Shultz, \it Complete positivity of the map from
a basis to its dual basis, \rm J. Math. Phys. {\bf 54} (2013),
072201.

\bibitem{ppt}
M. B. Ruskai, M. Junge, D. Kribs, P. Hayden, A. Winter,
\it Operator structures in quantum information theory,
\rm Final Report, Banff International Research Station (2012).

\bibitem{schneider}
H. Schneider,
\it Positive operators and an inertia theorem,
\rm Numer. Math. \bf 7 \rm (1965), 11--17.

\bibitem{semrl_souror}
P. \v{S}emrl and A. R. Sourour, \it Order preserving maps on Hermitian
matrices, \rm J. Austral. Math. Soc. {\bf 95} (2013), 129--132.

\bibitem{sko-laa}
\L . Skowronek, \it Cones with a mapping cone symmetry in the
finite-dimensional case, \rm Linear Alg. Appl. {\bf 435} (2011),
361--370.

\bibitem{ssz}
\L. Skowronek, E. St\o rmer, and K. \. Zyczkowski, \it Cones of
positive maps and their duality relations, \rm J. Math. Phys. {\bf
50}, (2009), 062106.

\bibitem{stormer-dual}
E. St\o rmer, \it Extension of positive maps into $B(\mathcal H)$,
\rm J. Funct. Anal. \bf 66 \rm (1986), 235--254.

\bibitem{stormer_book}
E. St\o rmer, \lq\lq Positive Linear Maps of Operator Algebras\rq\rq,
Springer-Verlag, 2013.

\bibitem{stormer_choi_mat}
E. St\o rmer,
\it The analogue of Choi matrices for a class of linear maps on Von Neumann algebras,
\rm Intern. J. Math. {\bf 26} (2015), 1550018.

\bibitem{terhal-sghmidt}
B. M. Terhal and P. Horodecki, \it A Schmidt number for density
matrices \rm Phys. Rev. A \bf 61 \rm (2000), 040301.

\bibitem{Werner-1989}
R. F. Werner, \it Quantum states with {E}instein-{P}odolsky-{R}osen
correlations admitting a hidden-variable model, \rm Phys. Rev. A,
{\bf 40} (1989), 4277--4281.

\end{thebibliography}
\end{document}